\setlist[enumerate,1]{label=(\roman*)}  
\theoremstyle{plain}
	\newtheorem{theorem}       {Theorem}[section]
	\newtheorem{lemma}[theorem]{Lemma}
	\newtheorem{prop} [theorem]{Proposition}
\theoremstyle{definition}
	\newtheorem{definition}[theorem]{Definition}
\theoremstyle{remark}
	\newtheorem{remark}[theorem]{Remark}
	\newtheorem{example}[theorem]{Example}
\newcommand{\esssup}{\operatornamewithlimits{ess\,sup}}     
\newcommand{\essinf}{\operatornamewithlimits{ess\,inf}}     
\DeclareMathOperator{\E}{{\mathds E}}		
\DeclareMathOperator{\AVaR}{\mathsf{AV@R}}
\DeclareMathOperator{\dom}{\text{dom}}
\DeclareMathOperator{\interior}{\text{int}}	
\DeclareMathOperator{\sign}{sign}
\newcommand{\one}{\mathds 1}    
\title{\textbf{Convex Risk Measures based on Divergence}}
\author{
	Paul Dommel\thanks{University of Technology, Chemnitz, Germany. Funded by Deutsche Forschungsgemeinschaft (DFG, German Research Foundation)~-- Project-ID 416228727 -- SFB~1410.}
	\and
	Alois Pichler\footnotemark[1]~ \footnote{Corresponding author: \href{alois.pichler@math.tu-chemnitz.de}{alois.pichler@math.tu-chemnitz.de}}}
\begin{document}
	
	
	\maketitle
	
	\begin{abstract}
		Risk measures connect probability theory or statistics to optimization, particularly to convex optimization.
		They are nowadays standard in applications of finance and in insurance involving risk aversion.
		
		This paper investigates a wide class of risk measures on Orlicz spaces.
		The characterizing function describes the decision maker's risk assessment towards increasing losses.
		We link the risk measures to a crucial formula developed by Rockafellar for the Average Value-at-Risk based on convex duality, which is fundamental in corresponding optimization problems.
		We characterize the dual and provide complementary representations.
	\end{abstract}

	\emph{Keywords}: risk measures, Orlicz spaces, Duality\par
	\emph{MSC classification}: 91G70, 94A17, 46E30, 49N1

\section{Introduction}
	Risk measures are of fundamental importance in assessing risk, they have numerous applications in finance and in actuarial mathematics.
	A cornerstone is the Average Value-at-Risk, which has been considered in insurance first.
	\citet{Rockafellar, RockafellarUryasev2000} develop its dual representation, which is an important tool when employing risk measures for concrete optimization. 
	Even more, the Average Value-at-Risk is the major building block in what is known as the Kusuoka representation.
	The duality relations are also elaborated in \citet{RuszOgryczak, Rusz1999}.
	
	Risk measures are most typically considered on Lebesgue spaces as $L^1$ or $L^\infty$, although these are not the most general Banach space to consider them. An important reason for choosing this domain is that risk measures are Lipschitz continuous on~$L^\infty$.
	
	A wide class of risk measures can be properly defined on function spaces as Orlicz spaces.
	These risk functionals get some attention in \citet{Bellini201441}, while \citet{Bellini2012, Cheridito2009a, Cheridito2008} elaborate their general properties.
	\citet{Delbaen2019} 
	investigate risk aversion on Orlicz spaces as well, but they consider a somewhat wider class of risk functionals, which is not necessarily law invariant.
	
	\citet{AhmadiJavidEVaR} considers a specific risk measure --- the Entropic Value-at-Risk --- which is associated with Kullback--Leibler divergence. \citet{Delbaen2015} elaborates its Kusuoka representation and \citet{AhmadiPichler} present the natural domain.
	This paper as well notices possible extensions by involving a more general divergence. 
	R\'enyi divergence is a specific extension of Kullback--Leibler divergence, which is the building block for the risk measures in \citet{PichlerSchlotterEntropy}.
	\citet{Breuer2013, Breuer2013a} realize that divergences are indeed essential in assessing risk.
	The divergence specifies a set of ambiguity, cf.\ \cite{Rockafellar2015}.

	This paper addresses general divergences and Fisher information. We derive the result that risk measures, which are built on divergence, are most naturally associated with a specific Orlicz space of random variables.
	For this reason we investigate them in depth here and identify its natural domain as well as its topological and convex dual.

	Risk measures are not solely investigated to measure, to handle or to hedge risk. \citet{Rockafellar2013} develop a comprehensive theory involving risk measures in four different aspects, which are all interconnected. Their concept of \emph{risk quadrangles} has become essential in understanding risk as well (cf.\ \citet{Rockafellar2016}).

	\paragraph{Outline of the paper.}
	The following section recalls essentials from generalized divergence 
	and introduces the notation. Section~\ref{sec:3} introduces the $\varphi$-divergence risk measure and Section~\ref{sec:4} discusses its natural domain and the associated norm. 
	In Section~\ref{sec:Representations} we derive important representations, including the dual representation and the Kusuoka representation.
	We finally characterize the dual norm and exploit the convincing properties of the risk measure for concrete optimization problems.
	Section~\ref{sec:Summary} concludes the paper with a closing discussion.

\section{Preliminaries}\label{sec:Preliminaries}
	In what follows we repeat the definition of risk measures and divergence. 
	The first subsection states the definition and interpretation of risk measures. We further provide some interpretations which cause their outstanding importance in economics.
		
	\subsection{Risk measures}
	A risk measure is a function $\rho$ mapping random variables from some space~$L$ to the reals, $\rho \colon L \to \mathbb{R} \cup \{ \infty \}$.
	The inherent interpretation is that the random variable $X$ with random outcomes is associated with the risk $\rho(X)$.
	In insurance, the number $\rho(X)$ is understood as premium for the insurance policy~$X$.

	Axioms for risk measures have been introduced by \citet{Artzner1997, Artzner1999}. A risk measure is called \emph{coherent} if it satisfies the following axioms (cf.\ also \citet{Rockafellar2014}):
	\begin{enumerate}[label=A\arabic*., ref=A\arabic*, noitemsep]
		\item Monotonicity: $\rho(X_1) \le \rho(X_2)$ provided that $X_1 \le X_2$ almost surely. \label{enu:M}
		\item Translation equivariance: $\rho(X + c) = \rho(X) + c$ for any $X \in L$ and $c \in \mathbb{R}$. \label{enu:TI}
		\item Subadditivity: $\rho(X_1 + X_2 ) \le \rho( X_1 ) + \rho (X_2)$ for all $X_1,X_2 \in L$. \label{enu:SA}
		\item Positive homogeneity: $\rho(\lambda\, X) = \lambda\, \rho(X)$ for all $X \in L$ and $\lambda > 0$. \label{enu:PH}
	\end{enumerate}
	The term risk measure is also used in the literature for the Axioms~\ref{enu:M}--\ref{enu:SA}, while the term \emph{coherent} specifically refers to the Axiom~\ref{enu:PH}.
	
	The domain $L$ of the risk functional is often not specified.
	In what follows we introduce $\varphi$-divergence and elaborate the natural domain, which is as large as possible, of the associated risk measures.

\subsection{Divergence}
	Divergence is a concept originating from statistics.
	The divergence quantifies, how much a probability measure deviates from an other measure.
	We define divergence functions first to introduce the general $\varphi$-divergence.

	\begin{definition}[Divergence function]\label{def:divergence function}
		A convex and lsc.\ function $\varphi\colon \mathbb{R} \to \mathbb{R} \cup \{ \infty \}$ is a \emph{divergence function} if $\varphi(1) = 0$, $\dom(\varphi) = [0, \infty)$ and 
		\begin{equation}\label{eq: growth condition}
			\lim_{x \to \infty} \frac{\varphi(x)}{x} = \infty.
		\end{equation}
	\end{definition}
	\begin{remark}[$\varphi$-divergence]\label{rem:phi-divergence}
		The term divergence function is inspired by $\varphi$-\emph{divergence}.
		For a divergence function $\varphi$, the $\varphi$-\emph{divergence} of a probability measure $Q$ from $P$ is given by
		\begin{equation*}
			D_{\varphi}(Q \parallel P) := \int_{\Omega} \varphi \left( \frac{dQ}{dP} \right) d P
		\end{equation*}
		if $Q \ll P$ and $\infty$ otherwise. This divergence is an important concept of a non-symmetric distance between probability measures. Kullback--Leibler is the divergence obtained for $\varphi(x)=x\log x$. For a detailed discussion of the general $\varphi$-divergence we refer to \citet{Breuer2013, Breuer2013a}.
	\end{remark}
	\medskip
	
	In what follows we assume that  $\varphi$ is a divergence function satisfying all conditions of Definition~\ref{def:divergence function}. Associated with $\varphi$ is its convex conjugate~$\psi$ defined by $\psi(y) := \sup_{z \in \mathbb{R}} y \, z - \varphi(z)$. These two functions satisfy the Fenchel--Young inequality 
   	\begin{equation}\label{eq:Fenchel-Young}
	    x \, y \le \varphi(x) + \psi(y),\qquad x, \,y\ \text{ in }\mathbb R,
    \end{equation}
    and further properties, as stated in the following proposition. 
	\begin{prop}\label{prop:1}
		Let $\varphi$ be divergence function and $\psi$ its convex conjugate.
		The following statements hold true:
		\begin{enumerate}[noitemsep, nolistsep]
			\item\label{enu:1} $\varphi$ and $\psi$ are continuous on $(0, \infty)$ and $(-\infty, \infty)$, respectively. 
			\item\label{enu:2} $\psi$ is non-drecasing.
			\item\label{enu:3}  It holds that $y\le\psi(y)$ for every $y \in \mathbb{R}$.
		\end{enumerate} 
	\end{prop}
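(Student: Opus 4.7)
The plan is to handle the three items in turn, exploiting the basic convex-analytic structure of $\varphi$ and its conjugate $\psi$.

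For item~\ref{enu:1}, I would invoke the standard fact that a proper convex function is continuous on the interior of its effective domain. Since $\dom(\varphi) = [0,\infty)$, this immediately yields continuity of $\varphi$ on $(0,\infty)$. For $\psi$, the key point is to show that $\psi$ is everywhere finite on $\mathbb{R}$, because then $\psi$ is a real-valued convex function on the whole line and hence continuous. Finiteness is where the growth condition~\eqref{eq: growth condition} enters: because $\varphi \equiv \infty$ on $(-\infty,0)$, the supremum in $\psi(y) = \sup_z \bigl(yz - \varphi(z)\bigr)$ reduces to a supremum over $z \ge 0$; superlinearity then forces $yz - \varphi(z) \to -\infty$ as $z \to \infty$ for every $y$, so the supremum is finite (and in fact attained by lower semicontinuity of $\varphi$).

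Item~\ref{enu:2} follows directly from the restriction $z \ge 0$. For $y_1 \le y_2$ and any $z \ge 0$ we have $y_1 z - \varphi(z) \le y_2 z - \varphi(z)$, and taking the supremum over $z \ge 0$ preserves the inequality, giving $\psi(y_1) \le \psi(y_2)$. The nonnegativity of the admissible $z$ is essential here, since otherwise monotonicity could fail.

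Item~\ref{enu:3} is the easiest: evaluating the supremum in the definition of $\psi$ at $z=1$ and using $\varphi(1) = 0$ from Definition~\ref{def:divergence function} yields $\psi(y) \ge y\cdot 1 - \varphi(1) = y$.

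The only nontrivial point is establishing real-valuedness of $\psi$ in~\ref{enu:1}; all other steps are one-liners from the definitions. I expect the author's proof to be essentially the same, perhaps phrased in terms of general properties of Legendre--Fenchel conjugates of superlinear convex functions.
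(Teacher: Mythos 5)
Your proposal is correct and follows essentially the same route as the paper: continuity of $\varphi$ via the standard interior-of-domain result, finiteness (hence continuity) of $\psi$ from the superlinear growth condition~\eqref{eq: growth condition} after restricting the supremum to $z\ge 0$, monotonicity from that same restriction, and $\psi(y)\ge y$ by testing at $z=1$ with $\varphi(1)=0$. The only difference is cosmetic: the paper argues finiteness of $\psi$ by contraposition, whereas you argue directly that $yz-\varphi(z)\to-\infty$.
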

	\begin{proof}
	For the first assertion we recall \citet[Theorem~10.4]{Rockafellar1970}, which states that a convex function is continuous on the interior of its domain. 
	Therefore continuity of $\varphi$ is immediate.
	For continuity of $\psi$ it is sufficient to demonstrate that $\psi(y) < \infty$ holds for every $y \in \mathbb{R}$.
	By contraposition we assume there is a point $y \in \mathbb{R}$ such that 
	\begin{align*}
		\infty = \psi(y) = \sup_{z \in \mathbb{R}} y\,z - \varphi(z) = \sup_{z \in \dom(\varphi)} y\,z - \varphi(z) =  \sup_{z \ge 0} y\,z - \varphi(z).
	\end{align*}
	The function $\varphi$ is finite in its domain and thus the supremum can not be attained at some point $z^{\ast} \ge 0$. We thus have 
	\begin{align*}
		\infty = \psi(y) = \lim_{z \to \infty} y\, z - \varphi(z) =  \lim_{z \to \infty} z \left(y - \frac{\varphi(z)}{z} \right)
	\end{align*}
	and consequently $\lim_{z \to \infty} \left(y - \frac{\varphi(z)}{z} \right) \ge 0$. This contradicts assumption~\eqref{eq: growth condition}, i.e., $\frac{\varphi(z)}{z}$ tends to~$\infty $ for $z \to \infty$.
	
	The second assertion~\ref{enu:2} follows from
	\begin{align*}
		\psi(y_1) = \sup_{z \in \mathbb{R}} y_1z - \varphi(z)=  \sup_{z \ge 0} y_1z - \varphi(z) \le \sup_{z \ge 0} y_2z - \varphi(z) = \psi(y_2)
	\end{align*}
	for $y_1 \le y_2$.
	We finally have that
	\begin{align*}
		\psi(y) = \sup_{z \in \mathbb{R}} yz - \varphi(z) \ge  y \cdot 1 - \varphi(1) =  y, \qquad y \in \mathbb{R},
	\end{align*}
	which completes the proof.
	\end{proof}

\section{$\varphi$-divergence risk measures}\label{sec:3}
\citet{AhmadiJavidEVaR, AhmadiJavidEVaR2} introduces the Entropic Value-at-Risk based on Kullback--Leibler divergence and briefly mentions a possible generalization.
We pick up and advance this idea and demonstrate that $\varphi$-divergence risk measures are indeed coherent risk measures as specified by the Axioms~\ref{enu:M}--\ref{enu:PH} above.

In what follows we deduce further properties of these risk measures, which are of importance in subsequent investigations.
\begin{definition}[$\varphi$-divergence risk measure]
	Let  $\varphi$ be a divergence function with convex conjugate~$\psi$. The \emph{$\varphi$-divergence risk measure} $\rho_{\varphi, \beta}\colon L^1 \to \mathbb{R} \cup \{\infty\}$ is
	\begin{equation}
		\label{infdef}
		\rho_{\varphi, \beta} (X) := \inf_{\substack{\mu \in \mathbb{R},\\ t > 0}} t \left( \beta + \mu + \E  \psi \left( \frac{X}{t} - \mu \right) \right),
	\end{equation}
	where the coefficient $\beta>0$ indicates risk aversion. 
\end{definition}
\begin{remark}[Interpretation and motivation]
	The divergence function $\varphi$ characterizes the shape of risk aversion for increasing risk, while the risk aversion coefficient $\beta$ describes the tendency of an investor to avoid risk. 
\end{remark}

The risk measure in~\eqref{infdef} above is well defined for $X\in L^1$, as \begin{equation}\label{eq:Erwartungswert}
	\E X \le \rho_{\varphi, \beta}(X)
\end{equation} by Proposition~\ref{prop:1}\,\ref{enu:3}. Note, however, that the risk measure may be unbounded, i.e.,  $\rho_{\varphi, \beta}(X)=\infty$.
Further observe that $\rho_{\varphi, \beta }$ only depends on the expectation and is therefore law invariant, i.e., the risk measure evaluates random variables $X$ and $X^\prime$ equally, provided that $P(X\le x)=P(X^\prime \le x)$ for all $x\in\mathbb R$.

The following proposition demonstrates that $\rho_{\varphi, \beta}$ is indeed a coherent risk measure.
\begin{prop}
	The functional $\rho_{\varphi,\beta}$ is a coherent risk measure, it satisfies all Axioms~\ref{enu:M}--\ref{enu:PH} above.
\end{prop}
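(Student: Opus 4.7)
The plan is to verify the four axioms one at a time, keeping the structural symmetry of the expression~\eqref{infdef} in mind. Monotonicity~\ref{enu:M} is essentially immediate: if $X_1\le X_2$ almost surely then $\frac{X_1}{t}-\mu\le\frac{X_2}{t}-\mu$, and since $\psi$ is non-decreasing by Proposition~\ref{prop:1}\,\ref{enu:2}, the integrand grows pointwise; hence the same inequality survives the expectation, the affine wrapper and the infimum over $(\mu,t)$.

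Translation equivariance~\ref{enu:TI} and positive homogeneity~\ref{enu:PH} both follow from change-of-variable in the infimum. For~\ref{enu:TI}, I would substitute $X+c$ and set $\mu'\coloneqq\mu-c/t$; the term $t\cdot c/t=c$ factors out cleanly and the remaining infimum is exactly $\rho_{\varphi,\beta}(X)$. For~\ref{enu:PH} with $\lambda>0$, substitute $\lambda X$ and set $t'\coloneqq t/\lambda$; the factor $\lambda$ comes out in front and reproduces $\lambda\,\rho_{\varphi,\beta}(X)$.

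The main work is subadditivity~\ref{enu:SA}, and I expect this to be the step worth spelling out. Given admissible pairs $(\mu_1,t_1)$ and $(\mu_2,t_2)$ for $X_1$ and $X_2$, the natural candidate for $X_1+X_2$ is
\begin{equation*}
	t\coloneqq t_1+t_2,\qquad \mu\coloneqq\lambda_1\mu_1+\lambda_2\mu_2,\qquad\text{where }\lambda_i\coloneqq\tfrac{t_i}{t_1+t_2}.
\end{equation*}
Since $\lambda_1+\lambda_2=1$, the argument of $\psi$ decomposes as
\begin{equation*}
	\frac{X_1+X_2}{t_1+t_2}-\mu=\lambda_1\!\left(\frac{X_1}{t_1}-\mu_1\right)+\lambda_2\!\left(\frac{X_2}{t_2}-\mu_2\right),
\end{equation*}
and convexity of $\psi$ yields
\begin{equation*}
	(t_1+t_2)\,\psi\!\left(\tfrac{X_1+X_2}{t_1+t_2}-\mu\right)\le t_1\psi\!\left(\tfrac{X_1}{t_1}-\mu_1\right)+t_2\psi\!\left(\tfrac{X_2}{t_2}-\mu_2\right).
\end{equation*}
Taking expectations and adding $(t_1+t_2)\beta+t_1\mu_1+t_2\mu_2=(t_1+t_2)(\beta+\mu)$ to both sides splits the right-hand side into the two desired summands; passing to the infimum over $(\mu_1,t_1)$ and $(\mu_2,t_2)$ delivers $\rho_{\varphi,\beta}(X_1+X_2)\le\rho_{\varphi,\beta}(X_1)+\rho_{\varphi,\beta}(X_2)$.

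The only subtlety I anticipate is making sure the constructed $(\mu,t)$ is admissible (i.e.\ $t>0$), which is automatic once $t_1,t_2>0$, and guaranteeing that infinite values on one side do not spoil the argument, which is handled by the convention that the inequality is trivially true when either summand on the right is~$+\infty$.
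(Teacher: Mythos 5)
Your proposal is correct and follows essentially the same route as the paper's proof: the same substitutions $\tilde\mu=\mu-c/t$ and $\tilde t=t/\lambda$ for translation equivariance and positive homogeneity, monotonicity via Proposition~\ref{prop:1}\,\ref{enu:2}, and subadditivity via the identical convex combination with weights $t_i/(t_1+t_2)$ and convexity of $\psi$ (which the paper phrases as Jensen's inequality). No substantive differences to report.
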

\begin{proof} 
	To demonstrate translation equivariance let $c \in \mathbb{R}$ be given. 
	Employing the substitution $ \tilde{\mu} := \mu - \frac{c}{t}$ we have that 
	\begin{align*}
		\rho_{\varphi, \beta} (X+c) &= \inf_{\substack{\mu \in \mathbb{R} \\ t > 0}} t \left(\beta + \mu + \E  \psi \left(\frac{X+c}{t} - \mu  \right ) \right)   \\ 
		&= \inf_{\substack{\mu \in \mathbb{R} \\ t > 0}} t \left( \beta + \tilde{\mu} + \frac{c}{t} + \E  \psi \left(\frac{X}{t} - \tilde{\mu}   \right ) \right) 
		= \rho_{\varphi, \beta }(X) + c,
	\end{align*} 
	which is translation equivariance,~\ref{enu:TI}.
	As for positive homogeneity observe that
	\begin{align*}
		\rho_{\varphi, \beta}( \lambda X) &= \inf_{\substack{\mu \in \mathbb{R} \\ t > 0}} t \left( \beta + \mu + \E  \psi \left(\frac{ \lambda X}{t} - \mu  \right ) \right)  \\ 
		&= \inf_{\substack{\mu \in \mathbb{R} \\ t > 0}} \lambda \tilde{t} \left( \beta + \mu + \E  \psi \left(\frac{\lambda X}{\lambda \tilde{t}} - \mu  \right ) \right) = \lambda\, \rho_{\varphi, \beta}(X),
	\end{align*}
	where we have substituted $\tilde{t} := \frac{t}{\lambda}$.

	Monotonicity follows directly from monotonicity of~$\psi$ (Proposition~\ref{prop:1}~\ref{enu:2}). Indeed, provided that $X_1 \le X_2$ we have that
	\begin{align*}
		\E \psi \left( \frac{X_1}{t} - \mu \right) \le 	\E \psi \left( \frac{X_2}{t} - \mu \right), 
	\end{align*}
	which implies $\rho_{\varphi, \beta}(X_1) \le \rho_{\varphi, \beta}(X_2)$.

	As for subadditivity let $X$, $Y \in L^1$ be given. It holds that
	\begin{align*}
		\rho_{\varphi, \beta}(X)&+\rho_{\varphi, \beta}(Y)\\
		 =\inf_{\substack{\mu_1 \in \mathbb{R} \\ t_1 > 0}}& t_1 \left( \beta + \mu_1 + \E \left( \psi \left( \frac{X}{t_1}  - \mu_1  \right ) \right) \right) +  \inf_{ \substack{\mu_2 \in \mathbb{R} \\ t_2 > 0}} t_2 \left( \beta + \mu_2 + \E \left( \psi \left( \frac{Y}{t_2} - \mu_2  \right ) \right) \right) \\
		\ge &\inf_{ \substack{\mu_1, \mu_2 \in \mathbb{R} \\ t_1, t_2 > 0}} (t_1+t_2) \left( \beta + \frac{t_1\mu_1+t_2\mu_2}{t_1+t_2} +  \E \left( \frac{t_1}{t_1+t_2} \psi \left( \frac{X}{t_1}  - \mu_1  \right )   + \frac{t_2}{t_1+t_2}   \psi \left( \frac{Y}{t_2}  - \mu_2  \right ) \right) \right).
	\end{align*}
	Applying Jensen's inequality for the weights $\frac{t_1}{t_1+t_2}$ and $\frac{t_2}{t_1+t_2}$ gives 
	\begin{align*}
		\rho_{\varphi, \beta}(X)&+\rho_{\varphi, \beta}(Y) \\
		\ge  &\inf_{ \substack{\mu_1, \mu_2 \in \mathbb{R} \\ t_1, t_2 > 0}} (t_1+t_2) \left( \beta + \frac{t_1\mu_1+t_2\mu_2}{t_1+t_2} +  \E \left( \psi \left( \frac{X+Y}{t_1 + t_2}  - \frac{t_1\mu_1 + t_2 \mu_2}{t_1+t_2} \right)     \right) \right)\\
		=& \rho_{\varphi, \beta}(X+Y),
	\end{align*}
	as $t_1 + t_2 > 0$ and $\frac{t_1 \mu _1 + t_2 \mu_2 }{t_1+t_2} \in \mathbb{R}$. This proves~\ref{enu:SA} (subadditivity).
\end{proof}
\begin{remark}
	This proof of coherence of $\rho_{\varphi, \beta}$ does not involve all conditions imposed on $\varphi$ above.
	However, the particular condition~\eqref{eq: growth condition} turns out to be of importance for the proper domain of these risk measures, as Section~\ref{sec:4} outlines below.
\end{remark}
\begin{remark}[Bounds]
	The general inequality
	\begin{align*}
		0 \le \rho_{\varphi, \beta} (0) = \inf_{\substack{\mu \in \mathbb{R} \\ t> 0}} t\big(\mu + \psi(-\mu + \beta)\big) \le 0
	\end{align*}
	follows from~\eqref{eq:Erwartungswert} for the constant random variable $X=0$ and by letting $t\to0$. The general bounds 
	\begin{equation}\label{eq: Erwartungswert Supremum Grenzen}
		\E X \le \rho_{\varphi, \beta} (X)\le \esssup(X).
	\end{equation}
	follow from translation equivariance.
\end{remark}
The following proposition exposes the parameter of risk aversion $\beta$. 
We demonstrate that a larger parameter of risk aversion increases the risk assessment for every random variable.
\begin{prop}{\label{prop:riskaversion}}
	Suppose that $0 < \beta_1 \le \beta_2$. It holds that
	\begin{equation*}
		\rho_{\varphi, \beta_1} (X) \le 	\rho_{\varphi, \beta_2} (X)
	\end{equation*}
	for every $X \in L^1$.
	Conversely, for any non-negative random variable $X \ge 0$ we have that
	\begin{equation*}
		\rho_{\varphi, \beta_2} (X) \le \frac{\beta_2}{\beta_1}\, \rho_{\varphi, \beta_1} (X).
	\end{equation*}
\end{prop}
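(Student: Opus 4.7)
For the first inequality, my plan is to argue directly from the defining formula~\eqref{infdef}. For every admissible $(\mu, t)$ with $t > 0$, the quantity $t\bigl(\beta + \mu + \E\psi(X/t - \mu)\bigr)$ is non-decreasing in $\beta$; hence
\begin{equation*}
    t\bigl(\beta_1 + \mu + \E\psi(X/t - \mu)\bigr) \le t\bigl(\beta_2 + \mu + \E\psi(X/t - \mu)\bigr),
\end{equation*}
and taking the infimum over $(\mu, t)$ on both sides yields $\rho_{\varphi,\beta_1}(X) \le \rho_{\varphi,\beta_2}(X)$.

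For the second inequality, I would compare the two objectives at the \emph{same} admissible pair $(\mu, t)$ and pass to the infimum only at the very end. Specifically, the goal is to establish the pointwise estimate
\begin{equation*}
    t\bigl(\beta_2 + \mu + \E\psi(X/t - \mu)\bigr) \le \frac{\beta_2}{\beta_1}\, t\bigl(\beta_1 + \mu + \E\psi(X/t - \mu)\bigr), \qquad t > 0,\ \mu \in \mathbb{R},
\end{equation*}
whenever $X \ge 0$. Taking the infimum over $(\mu, t)$ on both sides then delivers $\rho_{\varphi,\beta_2}(X) \le \frac{\beta_2}{\beta_1} \rho_{\varphi,\beta_1}(X)$.

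Dividing this pointwise estimate by $t > 0$ and rearranging reduces it to $\bigl(\frac{\beta_2}{\beta_1} - 1\bigr)\bigl[\mu + \E\psi(X/t - \mu)\bigr] \ge 0$. Since $\beta_2 \ge \beta_1$ the first factor is non-negative, so it suffices to show $\mu + \E\psi(X/t - \mu) \ge 0$. This is precisely where the hypothesis $X \ge 0$ enters: Proposition~\ref{prop:1}\,\ref{enu:3} gives $\E\psi(X/t - \mu) \ge \E X/t - \mu$, whence $\mu + \E\psi(X/t - \mu) \ge \E X/t \ge 0$. I do not anticipate any real obstacle; the only pitfall is the tempting but ill-suited attempt to rescale a near-optimiser of $\rho_{\varphi,\beta_1}(X)$ by the factor $\beta_2/\beta_1$, which mismatches the $t\beta$ and $t\,\E\psi(\cdot)$ contributions of the objective. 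Comparing the two objectives at a common $(\mu, t)$ and invoking only the monotonicity in $\beta$ together with the Fenchel-type bound $\psi(z) \ge z$ is much cleaner.
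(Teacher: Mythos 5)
Your proof is correct. The first part coincides with the paper's argument (pointwise monotonicity of the objective in $\beta$, then pass to the infimum). For the second part you take a genuinely different route. The paper localizes the infimum: using $\E X\ge 0$ it shows that any (near-)optimal $t$ in~\eqref{infdef} for $\rho_{\varphi,\beta_1}$ satisfies $t\,\beta_1\le\rho_{\varphi,\beta_1}(X)$, restricts the infimum to $0<t\le\rho_{\varphi,\beta_1}(X)/\beta_1$, and then pays the extra cost $t(\beta_2-\beta_1)$ out of the surplus $\tfrac{\beta_2-\beta_1}{\beta_1}\rho_{\varphi,\beta_1}(X)$. You instead prove the pointwise ratio bound
\begin{equation*}
	t\Bigl(\beta_2+\mu+\E\psi\bigl(\tfrac{X}{t}-\mu\bigr)\Bigr)\le\frac{\beta_2}{\beta_1}\,t\Bigl(\beta_1+\mu+\E\psi\bigl(\tfrac{X}{t}-\mu\bigr)\Bigr)
\end{equation*}
at every admissible $(\mu,t)$, which after dividing by $t$ reduces to $\bigl(\tfrac{\beta_2}{\beta_1}-1\bigr)\bigl(\mu+\E\psi(\tfrac{X}{t}-\mu)\bigr)\ge 0$; the non-negativity of the bracket follows from Proposition~\ref{prop:1}\,\ref{enu:3} and $X\ge0$ exactly as you say (and the case $\E\psi(\tfrac{X}{t}-\mu)=+\infty$ is harmless). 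Both arguments use the same two ingredients ($\psi(y)\ge y$ and $X\ge0$), but yours is more elementary: it avoids any discussion of whether the infimum in~\eqref{infdef} is attained (the paper has to hedge with ``and $0$, if the infimum is not attained''), and it makes transparent that the constant $\beta_2/\beta_1$ arises as a ratio bound on the objective itself. The paper's version, in exchange, yields the a~priori bound $t\le\rho_{\varphi,\beta_1}(X)/\beta_1$ on optimal $t$, which is reused later (e.g.\ in the attainability discussion). Your closing remark about the pitfall of rescaling a near-optimizer is apt.
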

\begin{proof}
	It is immediate that
	\begin{align*}
		t \left( \beta_1 + \mu + \E  \psi \left( \frac{X}{t} - \mu  \right) \right) \le 	t \left( \beta_2 + \mu + \E \psi \left( \frac{X}{t} - \mu \right) \right), \qquad t>0,\,\mu\in \mathbb{R},
	\end{align*}
	and hence the first assertion.
	
	As for the second inequality assume that $X$ is non-negative. The inequality
	\begin{align*}
	\rho_{\varphi, \beta}(X) = t^{\ast} \left( \beta_1 + \mu^{\ast} + \E \psi \left( \frac{X}{t^{\ast}} - \mu^{\ast}  \right)\right) \ge t^{\ast} \, \beta_1 +  \E X \ge t^{\ast} \, \beta_1 
	\end{align*}
	follows with~\eqref{eq:Erwartungswert}, where $t^{\ast}$ denotes the optimal value inside of~\eqref{infdef} (and~$0$, if the infimum is not attained). In other words, the set of possible optimal values of $t$ is bounded by $\frac{\rho_{\varphi, \beta_1}(X)}{\beta_1}$.
	Consequently we have
	\begin{align*}
		\frac{\beta_2}{\beta_1} \, \rho_{\varphi, \beta_1}(X)
		&= \frac{\beta_2- \beta_1}{ \beta_1} \, \rho_{\varphi, \beta_1}(X) + \inf_{\substack{\mu \in \mathbb{R} \\ 0 < t \le \frac{\rho_{\varphi, \beta_1}(X)}{\beta_1}}} t \left( \beta_1 + \mu + \E \psi \left(  \frac{X}{t} - \mu \right)\right) \\
		&\ge \inf_{\substack{\mu \in \mathbb{R} \\ 0 < t \le \frac{\rho_{\varphi, \beta_1}(X)}{\beta_1}}} t \left(\beta_2- \beta_1 \right)  +  t \left( \beta_1 + \mu + \E  \psi \left( \frac{X}{t} - \mu \right)\right) \ge \rho_{\varphi, \beta_2}(X),
	\end{align*}
	the assertion.
\end{proof}

\section{Norms and domains}\label{sec:4}

This section demonstrates that the largest vector space on which $\varphi$-divergence risk measures are finite, are specific Orlicz spaces. We further show that $\varphi$-divergence norms, which are based on $\varphi$-divergence risk measures, are equivalent to certain Orlicz norms on these spaces.

\subsection{Norms associated with risk functionals}
Coherent risk measures induce semi-norms, cf.\ \citet{Pichler2013a, Pichler2017, KalmesPichler}.
Following this setting we introduce $\varphi$-divergence norms by
\begin{equation}\label{eq:norm}
	\|X\|_{\varphi, \beta} :=\rho_{\varphi, \beta} \left(| X | \right).
\end{equation}
This is indeed a norm, as $ \|X \|_{\varphi ,\beta} = 0$ if and only if $X = 0$, as follows from~\eqref{eq: Erwartungswert Supremum Grenzen}.

It is a consequence of~\ref{enu:M}--\ref{enu:PH} and the vector space axioms that $\|\cdot \|_{\varphi, \beta}$ is finite, iff $\rho_{\varphi, \beta} \left( \, \cdot \, \right)$ is finite.
We therefore consider the risk measure on the set
\begin{equation}\label{eq:8}
	\left\{ X \in L^{0} \colon \|X\|_{\varphi, \beta} < \infty \right\}.
\end{equation}
\begin{remark}
	By Proposition~\ref{prop:riskaversion} it follows for $\beta_1<\beta_2$ that 
	\begin{equation}\label{eq:normungleichung}
		\|X\|_{\varphi, \beta_1} \le \|X\|_{\varphi, \beta_2} \le \frac{\beta_2}{\beta_1} \, \|X\|_{\varphi, \beta_1}.
	\end{equation}
	The norms associated with risk functionals are thus equivalent for varying risk aversion parameters $\beta> 0$.
\end{remark}

\subsection{Orlicz spaces}
In what follows we discuss the spaces~\eqref{eq:8} endowed with norm~\eqref{eq:norm}. To this end we introduce the Orlicz class with their associated norms first.
\begin{definition}[Orlicz norms and spaces]\label{def:OSN}
	A convex function $\Phi \colon [0, \infty) \to [0, \infty)$ with $\Phi(0) = 0$,
	\begin{align*}
		\lim_{x \to 0} \frac{\Phi(x)}{x} = 0
	 \quad \text{and} \quad \lim_{x \to \infty} \frac{\Phi(x)}{x} = \infty \end{align*}
	 and its convex conjugate $\Psi$ are called a \emph{pair of complementary Young-functions}. Given a pair of complementary Young-functions $\Phi$ and $\Psi$, the norms
	\begin{alignat}{3}
		&\|X\|_{\Phi}   &&:= \sup_{\E \Psi(|Z|) \le 1} \E X \, Z\quad\text{ and}  \label{eq:Orlicz norm} \\
		&\|X\|_{(\Phi)} &&:= \inf \left\{ \lambda > 0 \colon \E \Phi \left( \frac{|X|}{\lambda} \right) \le 1 \right\} \label{eq:Luxemburg norm}
	\end{alignat}
	 are called \emph{Orlicz norm} and \emph{Luxemburg norm}, respectively.
	Further, the spaces
	\begin{alignat}{3}
		&M^{\Phi} &&:= \left\{ X \in L^{0} \colon \E \Phi \left( t \, |X| \right)  < \infty \text{ for all } t> 0 \right\} \text{ and}\label{eq:Orlicz Heart} \\
		&L^{\Phi} &&:=\left\{ X \in L^{0} \colon \E \Phi \left( t \, |X| \right)  < \infty \text{ for some } t> 0 \right\} \label{eq:Orlicz Space}
		\end{alignat}
	 are called \emph{Orlicz heart} and \emph{Orlicz opace}, respectively.
\end{definition}
\begin{remark}
	The Orlicz norm $\| \cdot \|_{\Phi}$ and the Luxemburg norm $\| \cdot \|_{\left(\Phi \right) }$ are topologically equivalent. More specifically, it holds that
	\begin{equation*}
		\|X\|_{\left(\Phi\right)}  
		\le \|X\|_{\Phi} 
		\le 2 \, \|X\|_{\left(\Phi\right)}
	\end{equation*} on $L^{\Phi}$ (see \citet[Theorem 4.8.5]{Pick2010}).
\end{remark}
The next Lemma relates divergence functions and Young functions. 	\begin{lemma}\label{lem:Young-func}
	Let $\varphi$ be a divergence function (cf.\ Definition~\ref{def:divergence function}).
	The function 
	\begin{equation}
	\label{eq:Young-function}
		\Phi (x) :=
		\begin{cases} 0 & \text{if }x \in [0,1] \\
			\max \left\{ 0 , \varphi(x) \right\}
			&\text{else}
		\end{cases}
		\end{equation}
   is an Young-function (cf.\ Definition~\ref{def:OSN}) and a divergence function (Definition~\ref{def:divergence function}). Further, for every $X \in L^{1}$, it holds that $\|X\|_{\varphi, \beta} < \infty$ if and only if $\|X\|_{\Phi, \beta} < \infty$ and 
   \begin{equation*}
		\frac{\beta}{\beta + d} \,	\|X\|_{\varphi, \beta} \le	\|X\|_{\Phi, \beta} \le \,	\frac{\beta + d}{\beta} \|X\|_{\varphi, \beta},
	\end{equation*}
	where $d := \|\varphi-\Phi\|_{L^\infty}= \sup_{x \ge 0} \left\lvert \varphi(x) - \Phi(x) \right\rvert$.
\end{lemma}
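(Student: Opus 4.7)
The plan is to reduce the norm comparison to a comparison of the convex conjugates $\psi$ and $\Psi$ of $\varphi$ and $\Phi$, and then to exploit the scaling of $\rho_{\cdot,\beta}$ in~$\beta$ from Proposition~\ref{prop:riskaversion}.

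First I would check that $\Phi$ qualifies as both a Young-function (Definition~\ref{def:OSN}) and a divergence function (Definition~\ref{def:divergence function}). The only delicate point is convexity: on $[1,\infty)$ the function $\Phi=\max\{0,\varphi\}$ is the pointwise maximum of two convex functions and thus convex, while on $[0,1]$ it is identically zero; at the junction $x=1$ the left derivative of $\Phi$ is $0$ and the right derivative equals $\max\{0,\varphi'_+(1)\}\ge 0$, so convexity is preserved across the kink. The remaining conditions $\Phi(0)=\Phi(1)=0$, $\lim_{x\to 0}\Phi(x)/x=0$ (trivial since $\Phi\equiv 0$ near~$0$), $\dom(\Phi)=[0,\infty)$, and $\lim_{x\to\infty}\Phi(x)/x=\infty$ (inherited from $\varphi$, since $\Phi=\varphi$ for all sufficiently large~$x$ by the growth condition) follow immediately. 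In particular, Proposition~\ref{prop:1} applies to $\Phi$ as well, so its conjugate $\Psi$ is finite on~$\mathbb{R}$.

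Next, I would convert the primal $L^\infty$-bound $|\varphi-\Phi|\le d$ on $[0,\infty)$ into an identical bound on the conjugates. Extending both $\varphi$ and $\Phi$ by $+\infty$ on $(-\infty,0)$, the two-sided inequality $\Phi-d\le\varphi\le\Phi+d$ persists on~$\mathbb{R}$, and taking convex conjugates yields $|\psi-\Psi|\le d$ pointwise on~$\mathbb{R}$. Inserting this bound into the defining formula~\eqref{infdef} and absorbing the additive constant~$d$ into~$\beta$ produces
\begin{equation*}
	\rho_{\varphi,\beta}(Y)\le\rho_{\Phi,\beta+d}(Y)\qquad\text{and}\qquad\rho_{\Phi,\beta}(Y)\le\rho_{\varphi,\beta+d}(Y)
\end{equation*}
for arbitrary $Y\in L^1$.

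Finally, specialising to $Y=|X|\ge 0$ and chaining these with the scaling estimate
\begin{equation*}
	\rho_{\cdot,\beta+d}(|X|)\le\frac{\beta+d}{\beta}\,\rho_{\cdot,\beta}(|X|)
\end{equation*}
from Proposition~\ref{prop:riskaversion} yields the two claimed bounds, and the equivalence $\|X\|_{\varphi,\beta}<\infty\Leftrightarrow\|X\|_{\Phi,\beta}<\infty$ is then an immediate corollary of the symmetric two-sided estimate. The main technical point is convexity of $\Phi$ at the kink $x=1$, which is settled by the one-sided derivative check above; the remainder is a routine assembly of the duality estimate with the risk-aversion scaling.
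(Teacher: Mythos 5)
Your proof is correct and follows essentially the same route as the paper's: establish convexity of $\Phi$, pass to the conjugate estimate $\lvert\psi-\Psi\rvert\le d$, and absorb the additive constant $d$ into $\beta$ via the scaling inequality of Proposition~\ref{prop:riskaversion}. The only cosmetic differences are that you check convexity at the kink $x=1$ via one-sided derivatives where the paper argues through monotonicity of $\Phi$ on $[1,\infty)$, and you should also record explicitly (as the paper does) that $d<\infty$, which follows from convexity of $\varphi$ and the growth condition~\eqref{eq: growth condition}.
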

\begin{proof}
	For the first assertion it is sufficient to show that $\Phi$ is convex, as the other properties are evident by the definition of $\varphi$ and $\Phi$. 
	Let $0 \le x \le y$ and $\lambda \in (0,1)$ be given. As $\max \left\{ 0 , \varphi \right \}$ is still convex, we may assume $x \in [0,1]$ and $y > 1$. 
	By employing $\varphi(1) = 0$, $\max \left\{ 0 , \varphi (x)\right \} \ge 0$ and the convexity of $\max \left\{ 0 , \varphi \right \}$, it follows that $\Phi$ is non-decreasing on $[1, \infty)$ and thus on $[0, \infty)$.
 	We therefore have
	\begin{align*}
		\Phi \left( \lambda \, x + (1 - \lambda) y \right) 
		\le 	\Phi \left( \lambda  + (1 - \lambda) \, y \right) 
		\le \lambda \, \Phi(1) + (1 - \lambda) \, \Phi(y)
		 = \lambda \, \Phi(x) + (1 - \lambda) \, \Phi(y)
	\end{align*}
	and hence the first assertion.

	For the second observe that $d  < \infty$ by~\eqref{eq: growth condition} and convexity of $\varphi$. Employing the obvious inequality $\varphi(x)- d\le \Phi(x)$ we get that
	\begin{align*}
	 	\Psi(y) 
	    = \sup_{z \in \mathbb{R}} y \, z -\Phi(z)
	    \le  \sup_{z \in \mathbb{R}} y \, z - \left(\varphi(z) - d \right) 
	  	\le \sup_{z \in \mathbb{R}} y \, z - \varphi(z) + d
	  	= \psi(y) + d
	\end{align*} for all $y \in \mathbb{R}$. Inserting this into~\eqref{infdef} it follows $ \|X\|_{\Phi, \beta} < \infty$ if $\|X\|_{\varphi, \beta} < \infty$ and 
	\begin{align*}
		\|X\|_{\Phi,\beta} 
		= \inf_{ \substack{ \mu \in \mathbb{R}, \\ t> 0}}  t \left( \beta + \mu + \E \Psi \left( \frac{|X|}{t} - \mu \right)\right) 
		&\le  \inf_{ \substack{ \mu \in \mathbb{R}, \\ t> 0}} t \left( \beta + d + \mu + \E \psi \left( \frac{|X|}{t} - \mu \right)\right) \\
		&= \|X\|_{\varphi, \beta + d} 
		\le \frac{\beta + d}{\beta} \, \|X \|_{\varphi, \beta} 
	\end{align*}
	by~\eqref{eq:normungleichung}. The proof of the converse statement is analogous.
\end{proof}
The following two theorems, which are the main results of this section, establish that the domains of divergence risk measures are specific Orlicz spaces.
\begin{theorem}[Equivalence of norms]\label{thm: orlicz equivalenz}
	Let $\varphi$ be a divergence function and the associated Young-function $\Phi$ be given from~\eqref{eq:Young-function}. It holds that $\|X\|_{\varphi,\beta}<\infty$ and  $\|X\|_{\Phi,\beta} < \infty$ if and only if $X \in L^{\Psi}$. Furthermore, the norms
	\[
		\| \cdot \|_{\varphi,\beta}, \quad	\| \cdot \|_{\Phi,\beta} \quad \text{ and }\quad \| \cdot \|_{\Phi}
	\]
	are equivalent on $ L^{\Psi}$. In particular we have the inequality
	\begin{equation}\label{eq:orlicz-equivalence}
		\frac{1}{\max\{ 1, \beta \}} \, \|X\|_{\Phi, \beta}	\le \|X\|_{\Phi} \le \frac{\Psi (1) + 1}{\min\{1, \beta\} } \, \|X\|_{\Phi, \beta}
	\end{equation}	for all $X \in L^{\Psi}$.
\end{theorem}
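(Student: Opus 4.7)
The plan is to separate the identification of the common domain from the quantitative norm comparison.

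For ``$\|X\|_{\Phi,\beta}<\infty\iff X\in L^{\Psi}$'' I argue directly from~\eqref{infdef}. The implication $(\Leftarrow)$ is immediate: given $t_0>0$ with $\E\Psi(|X|/t_0)<\infty$, evaluating the infimum at $(\mu,t)=(0,t_0)$ furnishes the finite upper bound $t_0\bigl(\beta+\E\Psi(|X|/t_0)\bigr)$. For $(\Rightarrow)$, pick a near-optimal pair $(\mu_0,t_0)$ with $\E\Psi(|X|/t_0-\mu_0)<\infty$ and partition $\Omega$ at $\{|X|\ge 2t_0|\mu_0|\}$. On this set, monotonicity of~$\Psi$ (Proposition~\ref{prop:1}\,\ref{enu:2}) forces $\Psi\bigl(|X|/(2t_0)\bigr)\le\Psi\bigl(|X|/t_0-\mu_0\bigr)$, whereas on its complement $\Psi\bigl(|X|/(2t_0)\bigr)\le\Psi(|\mu_0|)$ is a fixed finite constant. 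Summing the two contributions yields $\E\Psi\bigl(|X|/(2t_0)\bigr)<\infty$, i.e.\ $X\in L^{\Psi}$. The corresponding statement for $\|X\|_{\varphi,\beta}$ follows from Lemma~\ref{lem:Young-func}.

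For the norm equivalence on~$L^{\Psi}$, Lemma~\ref{lem:Young-func} already provides $\|\cdot\|_{\varphi,\beta}\asymp\|\cdot\|_{\Phi,\beta}$, so only the comparison of $\|\cdot\|_{\Phi,\beta}$ with~$\|\cdot\|_\Phi$ remains. The upper estimate $\|X\|_{\Phi,\beta}\le\max\{1,\beta\}\|X\|_\Phi$ is obtained by restricting to $\mu=0$ in~\eqref{infdef}, leaving the single-parameter infimum $\inf_{t>0}t\bigl(\beta+\E\Psi(|X|/t)\bigr)$; this is dominated by $\max\{1,\beta\}$ times the Amemiya-type representation of the Orlicz norm associated with~$L^{\Psi}$, and the constant $\max\{1,\beta\}$ is then factored out of the infimum. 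For the converse $\|X\|_\Phi\le\tfrac{\Psi(1)+1}{\min\{1,\beta\}}\|X\|_{\Phi,\beta}$ I first reduce to $\beta=1$ via Proposition~\ref{prop:riskaversion}, and then evaluate the Amemiya expression at the near-optimal~$t^{\ast}$ produced by the risk-measure infimum. The deviation caused by the auxiliary shift~$\mu^{\ast}$ is absorbed through a pointwise convexity-and-translation estimate for~$\Psi$, the cost of which produces the multiplicative factor $\Psi(1)+1$.

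The principal obstacle is the converse inequality, because the Amemiya expression for~$\|X\|_\Phi$ admits only a scaling parameter whereas~\eqref{infdef} involves both a scale~$t$ and a shift~$\mu$. Eliminating~$\mu$ requires a quantitative control of~$\Psi$ under a unit translation, and this is precisely what produces the constant $\Psi(1)+1$. The remaining ingredients---the $\beta$-interpolation through Proposition~\ref{prop:riskaversion} and the standard Orlicz--Luxemburg equivalence on~$L^{\Psi}$---are routine bookkeeping that contribute only harmless constants.
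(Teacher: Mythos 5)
Your overall architecture matches the paper's: reduce to $\beta=1$ via Proposition~\ref{prop:riskaversion}, get the easy direction by restricting to $\mu=0$ and invoking the Amemiya representation $\inf_{t>0}t\bigl(1+\E\Psi(|X|/t)\bigr)$ of the Orlicz norm, and get the converse by trading the shift $\mu$ for the constant $\Psi(1)+1$ via convexity. Your direct argument for $\|X\|_{\Phi,\beta}<\infty\Rightarrow X\in L^{\Psi}$ (partitioning at $\{|X|\ge 2t_0|\mu_0|\}$) is sound and is a genuinely different, more elementary route than the paper's, which obtains the domain identification only as a byproduct of the two-sided norm estimate; your version has the small advantage of not needing the quantitative inequality at all for that part.

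There is, however, one genuine gap in the converse inequality. The ``pointwise convexity-and-translation estimate'' you invoke is, made precise, the identity
\[
\frac{|X|}{t(1+\mu)} \;=\; \frac{1}{1+\mu}\Bigl(\frac{|X|}{t}-\mu\Bigr)+\frac{\mu}{1+\mu}\cdot 1,
\]
followed by Jensen/convexity of $\Psi$, which is where $\Psi(1)$ enters and which forces you to evaluate the Amemiya infimum at the rescaled point $t(1+\mu)$ rather than at $t$ itself. This is only a convex combination when $\mu\ge 0$ (and for $\mu\le -1$ the rescaling is not even admissible). Your sketch never addresses the sign of $\mu$, so as written the absorption step can fail. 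The fix is the paper's preliminary observation: since $y\le\Psi(y)$ (Proposition~\ref{prop:1}\,\ref{enu:3}) and $\Psi(0)=0$, the function $y\mapsto\Psi(y)-y$ is non-negative, convex and vanishes at the origin, hence non-decreasing on $[0,\infty)$; writing the objective as $\E|X|/t+\E\bigl(\Psi(|X|/t-\mu)-(|X|/t-\mu)\bigr)$ shows the infimum over $\mu\in\mathbb{R}$ is already attained on $\mu\ge 0$. With that reduction inserted, your argument goes through and coincides with the paper's.
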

\begin{proof}
Let be $X \in L^{\Psi}$. By employing~\eqref{eq:normungleichung} with $\beta=1$ it follows that
\begin{align*}
\frac{1}{\max\left\{1 , \beta \right\}}	\|X\|_{\Phi, \beta} \le	\|X\|_{\Phi, 1} \le \frac{1}{\min\left\{1 , \beta \right\}} \|X\|_{\Phi, \beta}
\end{align*} and it is thus sufficient to show~\eqref{eq:orlicz-equivalence} for $\beta = 1$. We have that
	\begin{align*}
	\|X\|_{\Phi, 1} 
	= \inf_{\substack{\mu \in \mathbb{R} ,  \\ t> 0}} t \left(1 + \mu +  \E \Psi \left( \frac{|X|}{t} - \mu \right)\right) 
	\le \inf_{\substack{ t> 0}} t \left(  1 + \E \Psi \left( \frac{|X|}{t} \right)\right),
	\end{align*}
	 where the last term  is an equivalent expression of the Orlicz norm in~\eqref{eq:Orlicz norm} (see \citet[Theorem 10.5]{KrasnoselskiiRutickii1961}).
	Therefore, the inequality
	\begin{equation*}
	\frac{1}{\max\{ 1, \beta \}} \|X\|_{\Phi, \beta}
		\le \inf_{\substack{ t> 0}} t \left(  1 + \E \Psi \left( \frac{|X|}{t} \right)\right)
		 =  \|X\|_{\Phi} < \infty
	\end{equation*} holds true. 
	
	To prove the converse inequality assume $\|X\|_{\Phi, 1} < \infty$. By the definition of $\Psi$ and Proposition~\ref{prop:1}\,\ref{enu:3} we have that $\Psi(0) = - \inf_{z \in \mathbb{R}} \Phi(z) = 0$ and  $- y + \Psi \left( y \right) \ge 0$ for all $y \in \mathbb{R}$. Therefore, as $- y + \Psi \left( y \right)$ is a non-negative, convex function which is $0$ in the origin, it is non-decreasing on $[0,\infty)$. Hence the infimum in~\eqref{infdef} is \emph{not} attained for  $\mu < 0$ and it follows that
	\begin{align*}
	\|X\|_{\Phi, 1} = \inf_{\substack{ \mu \in \mathbb{R}, \\ t> 0}} t \left( 1 + \mu + \E \Psi \left( \frac{|X|}{t} - \mu \right)\right)  = \inf_{\substack{ \mu \ge 0, \\ t> 0}} t \left(1 + \mu + \E \Psi \left( \frac{|X|}{t} - \mu  \right)\right) .
	\end{align*}
 	Moreover, as $t$ and $\Psi$ are non-negative, we get from $1 + \Psi(1) \ge 1$ that
  	\begin{align*}
  		 \inf_{\substack{ \mu \ge 0, \\ t> 0}}  t \left(1 + \mu + \E \Psi \left( \frac{|X|}{t} - \mu  \right)\right)  
  		 &\ge \frac{1}{1 + \Psi(1)} \inf_{\substack{ \mu \ge 0, \\ t> 0}} t  + \left(\mu \, t \right) \left(1 + \Psi(1) \right) + t  \E \Psi \left( \frac{|X|}{t} - \mu  \right) \\
  	     &= \inf_{\substack{\mu \ge 0, \\ t > 0}} \frac{t + \mu \, t}{\Psi(1)+1} \left(1 + \frac{\mu \, t}{t + \mu \, t}  \,
  		 \Psi \left( 1 \right) + \frac{ t}{t + \mu \, t} \E \Psi \left(\frac{|X|}{t} - \mu \right) \right)
  	\end{align*}
  	and therefore, by applying Jensen's inequality,
	\begin{align*}
	\infty > \|X\|_{\Phi, 1} &\ge \inf_{\substack{\mu \ge 0, \\ t > 0}} \frac{t + \mu \, t}{\Psi(1)+1} \left(1 + \frac{\mu \, t}{t + \mu \, t} 
	\Psi \left( 1 \right) + \frac{ t}{t + \mu \, t} \E \Psi \left(\frac{|X|}{t} - \mu \right) \right) \\
	 &\ge \inf_{\substack{\mu \ge 0, \\ t > 0}} \frac{t + \mu \, t}{\Psi(1)+1} \left(1 + \E \Psi \left(\frac{|X|}{t + \mu \, t} \right) \right) \ge \frac{1}{\Psi(1)+1} \|X\|_{\Phi}.
	\end{align*}
	This establishes $\|X\|_{\Phi, \beta} \Longleftrightarrow X \in L^{\Psi}$ as well as~\eqref{eq:orlicz-equivalence}. The remaining statement is immediate by Lemma~\ref{lem:Young-func}. This yields the claim.
\end{proof}

\begin{theorem}[Equivalence of spaces]\label{thm:equivalent spaces}
	Let $\psi$ be the convex conjugate of a divergence function $\varphi$.\footnote{The sets $M^{\psi}$ ($L^{\psi}$, resp.) are defined as in~\eqref{eq:Orlicz Heart} (in~\eqref{eq:Orlicz Space}, resp.).}
	It holds that $\|X\|_{\varphi, \beta} < \infty$ if and only if $X \in L^{\psi}$ and $(M^{\psi} ,\| \cdot \|_{\varphi,\beta}) \cong (M^{\Psi} ,\|\cdot \|_{\Phi})$ as well as $(L^{\psi} ,\| \cdot \|_{\Phi,\beta}) \cong (L^{\Psi} ,\|\cdot \|_{\Phi})$ (here, $\cong$ indicates a continuous isomorphism).
\end{theorem}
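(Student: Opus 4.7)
The strategy is to reduce the statement to Theorem~\ref{thm: orlicz equivalenz} by showing that the sets $L^{\psi}$ and $L^{\Psi}$ (respectively $M^{\psi}$ and $M^{\Psi}$) coincide as subsets of $L^{0}$; the claimed continuous isomorphisms then become nothing other than the identity map, and the required bicontinuity is supplied by the norm equivalences already established.

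The key technical step is to transfer the uniform bound $d = \|\varphi - \Phi\|_{L^{\infty}} < \infty$ from Lemma~\ref{lem:Young-func} to the conjugate side. The pointwise inequality $\varphi(x) - d \le \Phi(x) \le \varphi(x) + d$ for $x \ge 0$ yields, upon taking Fenchel conjugates, the bound $|\psi(y) - \Psi(y)| \le d$ for every $y \in \mathbb{R}$. Since $\psi(y) \ge y \ge 0$ for $y \ge 0$ by Proposition~\ref{prop:1}\,\ref{enu:3}, the integrands $\psi(t\,|X|)$ and $\Psi(t\,|X|)$ are non-negative and differ by at most $d$ almost surely; hence $\E\psi(t\,|X|)$ and $\E\Psi(t\,|X|)$ are simultaneously finite for every $t > 0$. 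This gives $L^{\psi} = L^{\Psi}$ and $M^{\psi} = M^{\Psi}$ as sets, and together with Theorem~\ref{thm: orlicz equivalenz} it immediately yields the equivalence $\|X\|_{\varphi,\beta} < \infty \iff X \in L^{\psi}$.

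For the two isomorphism statements I then combine the norm equivalences of Lemma~\ref{lem:Young-func} and Theorem~\ref{thm: orlicz equivalenz}: on the common set $L^{\psi} = L^{\Psi}$ the norms $\|\cdot\|_{\varphi,\beta}$, $\|\cdot\|_{\Phi,\beta}$ and $\|\cdot\|_{\Phi}$ are mutually equivalent, and the same equivalences hold after restriction to $M^{\psi} = M^{\Psi}$. The identity map is therefore a bicontinuous linear bijection in either case, which is precisely the asserted continuous isomorphism. The main delicate point is this identification of sets: although $\psi$ need not be a Young function in the sense of Definition~\ref{def:OSN} (for example, $\psi(0)$ may be strictly positive), the sets $L^{\psi}$ and $M^{\psi}$ remain well defined by the formulas~\eqref{eq:Orlicz Heart}--\eqref{eq:Orlicz Space}, and the Fenchel-duality argument above transfers the additive bound $d$ faithfully thanks to the monotonicity and lower bound provided by Proposition~\ref{prop:1}.
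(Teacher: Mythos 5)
Your proposal is correct and follows essentially the same route as the paper: the paper's own (very terse) proof likewise derives the two-sided bound $\psi(y)-d\le\Psi(y)\le\psi(y)+d$ from the proof of Lemma~\ref{lem:Young-func}, concludes the setwise identities $M^{\psi}=M^{\Psi}$ and $L^{\psi}=L^{\Psi}$, and then invokes Theorem~\ref{thm: orlicz equivalenz} for the norm equivalences that make the identity map a continuous isomorphism. Your write-up merely fills in the details the paper leaves implicit (the Fenchel-conjugation of the additive perturbation and the non-negativity of the integrands ensuring simultaneous finiteness of the expectations), which is a faithful elaboration rather than a different argument.
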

\begin{proof}
  We have $\psi(y) - d < \Psi(y) < \psi(y) + d$ as shown in the proof of Lemma~\ref{lem:Young-func} and hence the setwise identities $M^{\Psi} = M^{\psi}$ and $L^{\Psi} = L^{\psi}$. The remaining assertion follows from Theorem~\ref{thm: orlicz equivalenz}. 
\end{proof}
To emphasize the strength of the previous result we provide some propositions which are consequences of Theorem~\ref{thm:equivalent spaces} and general results on Orlicz space theory.
\begin{prop}\label{prop:banachraum}
The pairs $(M^{\psi} , \| \cdot \|_{\varphi,\beta})$ and $(L^{\psi} ,  \| \cdot \|_{\varphi,\beta})$ are Banach spaces.
\end{prop}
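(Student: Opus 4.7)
The plan is to reduce the proposition to the classical completeness of Orlicz spaces under the Orlicz norm, combined with the isomorphisms already established. Specifically, it is standard Orlicz space theory (see, e.g., Krasnoselskii--Rutickii) that $(L^{\Psi},\|\cdot\|_{\Phi})$ and $(M^{\Psi},\|\cdot\|_{\Phi})$ are Banach spaces for any pair of complementary Young functions $\Phi,\Psi$. I would invoke this as the only external ingredient.

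For the Orlicz heart, Theorem~\ref{thm:equivalent spaces} gives a continuous isomorphism of normed spaces $(M^{\psi},\|\cdot\|_{\varphi,\beta})\cong(M^{\Psi},\|\cdot\|_{\Phi})$. Since completeness is preserved under continuous isomorphism (equivalently, under equivalent norms on the same underlying space), $(M^{\psi},\|\cdot\|_{\varphi,\beta})$ inherits completeness from $(M^{\Psi},\|\cdot\|_{\Phi})$ and is therefore a Banach space.

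For the Orlicz space the argument is the same, with one extra bookkeeping step. Theorem~\ref{thm:equivalent spaces} establishes $(L^{\psi},\|\cdot\|_{\Phi,\beta})\cong(L^{\Psi},\|\cdot\|_{\Phi})$, so $(L^{\psi},\|\cdot\|_{\Phi,\beta})$ is Banach. To replace $\|\cdot\|_{\Phi,\beta}$ by $\|\cdot\|_{\varphi,\beta}$, I would appeal to Lemma~\ref{lem:Young-func}, whose two-sided bound
\[
\tfrac{\beta}{\beta+d}\,\|X\|_{\varphi,\beta}\le\|X\|_{\Phi,\beta}\le\tfrac{\beta+d}{\beta}\,\|X\|_{\varphi,\beta}
\]
shows that the two norms are equivalent on the common domain $L^{\psi}=L^{\Psi}$ (the setwise equality is recorded in Theorem~\ref{thm:equivalent spaces}). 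Equivalence of norms transfers completeness, so $(L^{\psi},\|\cdot\|_{\varphi,\beta})$ is Banach as well.

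There is no genuine obstacle here: all the substantive analysis -- identifying the natural domain with the Orlicz space and bounding $\|\cdot\|_{\varphi,\beta}$ above and below by $\|\cdot\|_{\Phi}$ -- was already carried out in Lemma~\ref{lem:Young-func} and Theorems~\ref{thm: orlicz equivalenz}/\ref{thm:equivalent spaces}. The proposition is therefore essentially a corollary, and the write-up is a one-paragraph chain of implications rather than a calculation.
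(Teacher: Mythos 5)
Your proposal is correct and follows essentially the same route as the paper, which simply cites the classical completeness of Orlicz spaces (Pick et al., Theorem~4.9.1) and lets the isomorphisms of Theorem~\ref{thm:equivalent spaces} (together with the norm equivalences from Lemma~\ref{lem:Young-func} and Theorem~\ref{thm: orlicz equivalenz}) transfer completeness. You merely spell out the bookkeeping between $\|\cdot\|_{\Phi,\beta}$ and $\|\cdot\|_{\varphi,\beta}$ that the paper leaves implicit.
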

\begin{prop}\label{prop:separabel}
 The simple functions are dense in $(M^{\psi}, \| \cdot \|_{\varphi, \beta})$. 
\end{prop}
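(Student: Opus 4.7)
The plan is to reduce the claim to the corresponding classical statement about Orlicz hearts, using the continuous isomorphism established in Theorem~\ref{thm:equivalent spaces}. Concretely, by that theorem, the identity map $(M^{\psi}, \|\cdot\|_{\varphi,\beta}) \to (M^{\Psi}, \|\cdot\|_{\Phi})$ is a continuous isomorphism, i.e., the norms $\|\cdot\|_{\varphi,\beta}$ and $\|\cdot\|_{\Phi}$ are equivalent on the common underlying set $M^{\psi}=M^{\Psi}$. Since density is a topological property and both norms induce the same topology, it is enough to show that simple functions are dense in $(M^{\Psi}, \|\cdot\|_{\Phi})$.

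For the latter I would invoke the classical fact from Orlicz space theory that $M^{\Psi}$ coincides with the subspace of elements in $L^{\Psi}$ having \emph{absolutely continuous norm} (this characterization of the Orlicz heart is standard; see e.g.\ Krasnoselskii--Rutickii or Rao--Ren), and that bounded functions --- hence simple functions --- are dense in that subspace with respect to $\|\cdot\|_{\Phi}$.

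If one prefers an elementary self-contained argument, the truncation approach works directly: for $X \in M^{\psi}$ set $X_n := \operatorname{sign}(X)\,\min\{|X|, n\}\,\one_{\{|X|\le n\}}$ (or a suitable simple approximation thereof), which is simple and dominated by $|X|$. One then needs to show $\|X - X_n\|_{\varphi,\beta} \to 0$, and by the norm equivalence this reduces to showing $\E\Psi\bigl(t(|X|-|X_n|)\bigr) \to 0$ for every fixed $t>0$. Since $|X|-|X_n|\to 0$ pointwise and is dominated by $|X|$, and since $\E\Psi(2t|X|)<\infty$ by definition of $M^{\psi}=M^{\Psi}$, the dominated convergence theorem (applied with the integrable majorant $\Psi(2t|X|)$, using convexity $\Psi(ts) \le \tfrac12\Psi(2ts)+\tfrac12\Psi(0)$) yields the claim.

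The only genuine subtlety, and the step I would be most careful about, is ensuring the absolute continuity / dominated convergence argument goes through uniformly in the parameters hidden in the infimum defining $\|\cdot\|_{\varphi,\beta}$. This is precisely why passing through the equivalence $(M^{\psi},\|\cdot\|_{\varphi,\beta}) \cong (M^{\Psi},\|\cdot\|_{\Phi})$ of Theorem~\ref{thm:equivalent spaces} is cleaner: it lets one work with the simpler Orlicz/Luxemburg norm where the heart-density statement is textbook, and then transfer back without any further analytic work.
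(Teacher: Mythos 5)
Your proposal is correct and follows essentially the same route as the paper, which likewise disposes of the statement by transferring the question through the norm equivalence of Theorem~\ref{thm:equivalent spaces} to the Orlicz heart $(M^{\Psi},\|\cdot\|_{\Phi})$ and citing the classical density result (the paper references \citet[Theorems~4.9.1 and~4.12.8]{Pick2010}). Your additional self-contained truncation argument via dominated convergence is a valid, more elementary supplement, but it is not needed once the reduction to the textbook statement is in place.
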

\begin{proof}
Cf.\ \citet[Theorem~4.9.1, Theorem~4.12.8]{Pick2010}.
\end{proof}
\begin{prop}\label{prop:dualraeume}
	The following duality relations hold true: 
	\begin{enumerate}
		\item $(M^{\psi}, \| \cdot \|_{\varphi, \beta})^{\ast} \cong (L^{\varphi}, \| \cdot \|^{\ast}_{\varphi, \beta})$, where $ ^\ast$ indicates the dual space (the dual norm, resp.).
		\item Assume $\varphi$ satisfies the $\Delta_2$-condition, i.e., there exist numbers $T$, $k \ge 0$ such that
		\begin{equation}\label{eq:Delta2}
			\varphi(2 \, x) \le k \, \varphi(x)\ \text{ for all }\ T<x.
		\end{equation}
		Then $(M^{\varphi}, \| \cdot \|_{\varphi, \beta}) = (L^{\varphi}, \| \cdot \|_{\varphi, \beta})$ and $(L^{\psi}, \| \cdot \|_{\varphi, \beta})  \cong (M^{\psi}, \| \cdot\|_{\varphi, \beta})^{\ast \ast}$.
		\item $(M^{\psi},\| \cdot \|_{\varphi, \beta})$ is reflexive if and only if $\varphi$ and $\psi$ satisfy the $\Delta_2$-condition.
	\end{enumerate}
\end{prop}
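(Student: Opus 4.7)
The plan is to reduce each claim, via Theorem~\ref{thm:equivalent spaces}, to a classical duality or reflexivity result for Orlicz hearts and Orlicz spaces (cf.\ \citet{Pick2010}).

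First I would verify that the $\Delta_2$-condition passes between $\varphi$ and its associated Young-function $\Phi$ from Lemma~\ref{lem:Young-func}, and likewise between $\psi$ and $\Psi$. Since $\varphi(x) \ge \varphi(1) = 0$ for $x > 1$ by convexity combined with~\eqref{eq: growth condition}, we have $\Phi(x) = \varphi(x)$ for all $x > 1$, so the first equivalence is immediate. For the pair $(\psi, \Psi)$, the bound $|\psi - \Psi| \le d$ established in the proof of Lemma~\ref{lem:Young-func}, combined with $\psi(y) \to \infty$, yields the equivalence of the $\Delta_2$-conditions by adjusting the constants $k$ and $T$.

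For~(i), I would invoke the classical duality $(M^\Psi, \|\cdot\|_\Phi)^* \cong L^\Phi$. Composed with the isomorphism $(M^\psi, \|\cdot\|_{\varphi,\beta}) \cong (M^\Psi, \|\cdot\|_\Phi)$ from Theorem~\ref{thm:equivalent spaces} and the setwise identity $L^\Phi = L^\varphi$ with equivalent norms (again via $|\Phi - \varphi| \le d$), this yields~(i) with the induced dual norm. For~(ii), the $\Delta_2$-condition on $\varphi$ (equivalently on $\Phi$) gives $M^\Phi = L^\Phi$ setwise by a classical Orlicz result, whence $M^\varphi = L^\varphi$ with equivalent norms; the first identity in (ii) then follows from Theorem~\ref{thm: orlicz equivalenz}. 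On the Orlicz side one computes $(M^\Psi)^{**} \cong (L^\Phi)^* = (M^\Phi)^* \cong L^\Psi$, and translating back through Theorem~\ref{thm:equivalent spaces} gives $(M^\psi, \|\cdot\|_{\varphi,\beta})^{**} \cong (L^\psi, \|\cdot\|_{\varphi,\beta})$. For~(iii), reflexivity is a Banach-space invariant, so $(M^\psi, \|\cdot\|_{\varphi,\beta})$ is reflexive if and only if $(M^\Psi, \|\cdot\|_\Phi)$ is; the latter is reflexive if and only if both $\Phi$ and $\Psi$ satisfy $\Delta_2$ (classical Orlicz theorem), which by the first step is equivalent to both $\varphi$ and $\psi$ satisfying $\Delta_2$.

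The main obstacle I expect is the first step: transferring the $\Delta_2$-condition between $\psi$ and $\Psi$ requires estimating the additive error $d$ relative to $\psi(y)$ for large $y$. This is routine, but has to be done carefully in order to extract a uniform constant in the resulting $\Delta_2$-inequality. The remaining steps are essentially translations between the notation of the paper and the standard Orlicz framework, with Theorem~\ref{thm:equivalent spaces} providing the bridge.
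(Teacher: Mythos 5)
Your proposal is correct and takes essentially the same route as the paper, whose entire proof is the citation \citet[Theorem~4.13.6, Remark~4.13.8 and Theorem~4.13.9]{Pick2010} applied after the isomorphisms of Theorem~\ref{thm:equivalent spaces}; your explicit verification that the $\Delta_2$-condition transfers between $\varphi,\Phi$ and between $\psi,\Psi$ fills in a step the paper leaves implicit. One small slip: convexity alone does not force $\varphi\ge 0$ on all of $(1,\infty)$ (the function may dip below $0$ just to the right of $1$ before the growth condition~\eqref{eq: growth condition} takes over), so $\Phi=\varphi$ is only guaranteed for sufficiently large $x$ --- but since the $\Delta_2$-condition~\eqref{eq:Delta2} only constrains large arguments, your conclusion is unaffected.
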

\begin{proof}
  \citet[Theorem~4.13.6, Remark~4.13.8 and Theorem~4.13.9]{Pick2010}.
\end{proof}

\section{Representations}\label{sec:Representations}
	This section establishes the dual representation of $\varphi$-divergence risk measures. We further deduce a simple criterion to ensure that the infimum in~\eqref{infdef} is attained. The Kusuoka's representation relates the $\varphi$-divergence risk measures with distortion risk measures, which are of practical importance.

\subsection{Dual Representation}
The subsequent theorem provides the exact shape of the dual representation of the $\varphi$-divergence risk measure.
\citet{AhmadiJavidEVaR} gives a similar result for $L^\infty$, but this space is \emph{not} dense in $L^\psi$ as \citet[Theorem~3.2]{AhmadiPichler} elaborate for the Entropic Value-at-Risk.
\begin{theorem}[Dual representation]\label{thm:dual representation}
	For every $X \in L^\psi$, the $\varphi$-divergence risk measure has the representation
	\begin{equation}\label{eq:dual representation}
		\rho_{\varphi, \beta} (X) = \sup_{Z \in M_{\varphi, \beta}} \E X \, Z,
	\end{equation}
	where
	\begin{equation}\label{eq:dual set}
		M_{\varphi, \beta} := \left\{ Z \in L^1 \colon Z \ge 0,\, \E Z= 1, \, \E \varphi(Z) \le \beta \right\}.
	\end{equation}
\end{theorem}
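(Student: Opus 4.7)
My plan is to prove the identity by establishing the two inequalities separately: the easy direction "$\ge$" via the Fenchel--Young inequality, and the harder direction "$\le$" by dualising the representation of $\psi$ as a pointwise supremum and then interchanging the order of supremum and infimum.

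For the weak-duality bound $\rho_{\varphi,\beta}(X)\ge\sup_{Z\in M_{\varphi,\beta}}\E XZ$, I would fix any $Z\in M_{\varphi,\beta}$ and any admissible pair $(\mu,t)\in\mathbb R\times(0,\infty)$ and apply the Fenchel--Young inequality~\eqref{eq:Fenchel-Young} pointwise with $x=Z(\omega)$ and $y=X(\omega)/t-\mu$ to get
\[
Z\left(\frac{X}{t}-\mu\right)\le\varphi(Z)+\psi\left(\frac{X}{t}-\mu\right).
\]
Taking expectations and using $\E Z=1$ together with $\E\varphi(Z)\le\beta$ gives
\[
\E XZ\le t\left(\beta+\mu+\E\psi\left(\frac{X}{t}-\mu\right)\right).
\]
Taking the infimum on the right over $(\mu,t)$ and the supremum on the left over $Z$ produces the inequality.

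For the reverse inequality I would start from the pointwise Fenchel identity $\psi(y)=\sup_{z\ge 0}(zy-\varphi(z))$, which by a standard measurable-selection argument (choose $Z^\ast(\omega)\in\partial\psi(X(\omega)/t-\mu)$ whenever the subdifferential is non-empty) yields
\[
\E\psi\left(\frac{X}{t}-\mu\right)=\sup_{Z\ge 0}\E\left[Z\left(\frac{X}{t}-\mu\right)-\varphi(Z)\right].
\]
Substituting into~\eqref{infdef} and rearranging gives
\[
\rho_{\varphi,\beta}(X)=\inf_{\mu,\,t>0}\,\sup_{Z\ge 0}\,\bigl[t\beta+t\mu(1-\E Z)+\E XZ-t\,\E\varphi(Z)\bigr].
\]
Once the swap of $\inf$ and $\sup$ is justified, the inner infimum over $\mu\in\mathbb R$ is $-\infty$ unless $\E Z=1$, and the subsequent infimum over $t>0$ is $-\infty$ unless $\E\varphi(Z)\le\beta$. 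On the feasible set the expression collapses to $\E XZ$, which is exactly $\sup_{Z\in M_{\varphi,\beta}}\E XZ$.

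The main obstacle is the rigorous justification of the sup-inf swap on the infinite-dimensional domain $L^\psi$, since Sion's minimax theorem does not apply directly without some compactness hypothesis. A more robust route, which I would prefer in a full write-up, is to compute the convex conjugate of $\rho_{\varphi,\beta}$ on $L^\psi$ directly---verifying that $\rho_{\varphi,\beta}^\ast(Z)=0$ on $M_{\varphi,\beta}$ and $+\infty$ elsewhere---and then to invoke the Fenchel--Moreau theorem, using lower semicontinuity of $\rho_{\varphi,\beta}$ (which follows from the variational definition and Fatou's lemma) together with the duality identification $(M^\psi)^\ast\cong L^\varphi$ from Proposition~\ref{prop:dualraeume} to fix the correct dual pairing. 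Theorem~\ref{thm:equivalent spaces} guarantees that we are working on a genuine Banach space, so the general conjugate-duality apparatus applies.
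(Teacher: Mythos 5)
Your weak-duality half is exactly the paper's argument and is fine: Fenchel--Young plus $\E Z=1$ and $\E\varphi(Z)\le\beta$ gives $\E XZ\le\rho_{\varphi,\beta}(X)$. The difficulty is, as you correctly identify, the reverse inequality, and there your proposal has a genuine gap: you reduce everything to an $\inf$--$\sup$ interchange but never actually justify it. The paper's key idea, which is missing from your write-up, is to run Lagrangian duality on the \emph{dual-side} program $\sup\{\E X Z : Z\ge 0,\ \E Z=1,\ \E\varphi(Z)\le\beta\}$: the constant $Z\equiv 1$ is strictly feasible because $\E\varphi(1)=0<\beta$, so Slater's condition (Luenberger) produces multipliers $\mu^\ast\in\mathbb R$, $t^\ast\ge 0$ with
\begin{equation*}
\sup_{Z\in M_{\varphi,\beta}}\E X Z=\sup_{Z\ge 0}\,\E X Z-\mu^\ast(\E Z-1)-t^\ast(\E\varphi(Z)-\beta),
\end{equation*}
and this Lagrangian value dominates the infimum over \emph{all} $(\mu,t)$, which after interchanging supremum and expectation (Theorem~\ref{thm:normal integrand}, the step you also invoke) equals $\rho_{\varphi,\beta}(X)$. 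That is precisely the minimax swap you need, obtained from the existence of a strictly feasible point; without some such argument your computation of the inner infima over $\mu$ and $t$ establishes nothing.

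Your proposed fallback via Fenchel--Moreau has its own problems. First, lower semicontinuity of $\rho_{\varphi,\beta}$ does not ``follow from the variational definition and Fatou's lemma'': \eqref{infdef} is an \emph{infimum} of lower semicontinuous functionals, and an infimum of lsc functions need not be lsc, so this requires a separate argument. Second, the theorem is stated on $L^\psi$, which by Proposition~\ref{prop:dualraeume} is in general non-reflexive and whose norm dual strictly contains $L^\varphi$ (it contains singular functionals); Fenchel--Moreau in the norm topology would therefore represent $\rho_{\varphi,\beta}$ as a supremum over a larger dual set than $M_{\varphi,\beta}\subset L^1$, and restricting the representation to $L^\varphi$ needs lower semicontinuity for $\sigma(L^\psi,L^\varphi)$ (a Fatou-type property), which is strictly stronger and again unproven. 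Finally, even granting all of that, verifying $\rho_{\varphi,\beta}^\ast(Z)=+\infty$ whenever $Z\ge0$, $\E Z=1$ but $\E\varphi(Z)>\beta$ is essentially the strong-duality assertion itself and is not addressed in your sketch.
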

In order to prove the dual representation we need to recall a result on so-called normal convex integrands. A function $g \colon \Omega \times \mathbb{R} \to ( - \infty, \infty]$ is said to be a \emph{normal convex integrand}, if~(i) $\omega\mapsto g(\omega, x)$ is measurable for every fixed $x$ and~(ii) if $x\mapsto g( \omega , x)$ is convex, lower semicontinuous and $\interior \dom \left(g( \omega , \cdot ) \right) = \emptyset$ for almost all $\omega \in \Omega$. The following theorem is a special case of \citet[p.~185, Theorem~3A]{rockafellar1976integral}. It states that the supremum and expectation can be interchanged for normal convex integrands, if certain conditions are satisfied (the space $L^1$ is notably decomposable).
\begin{theorem}[Interchangeability principle]\label{thm:normal integrand}
	Let $\left( \Omega, \mathcal{F} , P \right)$ be a probability space and $g \colon \Omega \times \mathbb{R} \to \mathbb{R} \cup \{ \infty \}$ a normal convex integrand. Then
	\begin{equation*}
		\sup_{X \in L^{1}\left( \Omega, \mathcal{F} , P \right)} \int_{\Omega} g \left(\omega, X(\omega) \right) \, P(d \omega)	= \int_{\Omega} \sup_{x \in \mathbb{R}} g \left( \omega , x \right) \, P( d \omega)
	\end{equation*}
	holds if the left supremum is finite.
\end{theorem}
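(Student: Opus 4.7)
The inequality $\le$ is immediate, since $g(\omega, X(\omega)) \le G(\omega) := \sup_{x \in \mathbb{R}} g(\omega, x)$ pointwise for every measurable $X$, so integration yields $\int g(\omega, X(\omega))\, dP \le \int G(\omega)\, dP$ for all admissible $X$.

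For the reverse direction, my plan is to follow the approach of \citet{rockafellar1976integral} and proceed by measurable selection followed by truncation. First, I would verify that $G$ is measurable, which follows from the normal-integrand structure: the strict superlevel multifunctions $\omega \mapsto \{x : g(\omega, x) > \alpha\}$ have measurable graph, so $\{G > \alpha\}$ is measurable. Then, for fixed $\varepsilon > 0$ and $n \in \mathbb{N}$, I would consider the set-valued mapping
\begin{equation*}
\Gamma(\omega) := \begin{cases} \{x \in \mathbb{R} : g(\omega, x) > G(\omega) - \varepsilon\} & \text{if } G(\omega) < n, \\ \{x \in \mathbb{R} : g(\omega, x) > n\} & \text{if } G(\omega) \ge n, \end{cases}
\end{equation*}
which has nonempty open values by lower semicontinuity of $g(\omega, \cdot)$ and measurable graph by normality. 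The Kuratowski--Ryll-Nardzewski selection theorem then produces a measurable $\tilde X \colon \Omega \to \mathbb{R}$ with $g(\omega, \tilde X(\omega)) \ge \min\{G(\omega) - \varepsilon, n\}$ almost surely.

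The selector $\tilde X$ need not lie in $L^1$. Exploiting the decomposability of $L^1$, I would fix an auxiliary $X_0 \in L^1$ with $\int g(\omega, X_0(\omega))\, dP > -\infty$ (e.g.\ $X_0 \equiv 0$ when $g(\omega, 0)$ is integrable, or else any other selection along these lines) and set $X_k := \tilde X \cdot \mathbf{1}_{\{|\tilde X| \le k\}} + X_0 \cdot \mathbf{1}_{\{|\tilde X| > k\}} \in L^1$. As $k \to \infty$, an application of monotone/dominated convergence gives $\int g(\omega, X_k(\omega))\, dP \to \int g(\omega, \tilde X(\omega))\, dP \ge \int \min\{G - \varepsilon, n\}\, dP$. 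Letting successively $n \to \infty$ (monotone convergence) and $\varepsilon \to 0$ yields $\sup_{X \in L^1} \int g(\omega, X(\omega))\, dP \ge \int G \, dP$, completing the claim.

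The main obstacle is the truncation step: on $\{|\tilde X| > k\}$ the integrand could a priori be very negative, which jeopardises convergence as $k \to \infty$. Controlling this requires the combination of decomposability with the uniform upper bound $g(\omega, X_k(\omega)) \le G(\omega)$ and the working hypothesis that the left-hand supremum is finite -- finiteness there forces $\int G\, dP < \infty$ a posteriori (otherwise the selection above would contradict the assumption), which is exactly what makes the dominated convergence argument applicable on the complement. The convexity hypothesis on $g(\omega, \cdot)$ streamlines the selection step (open convex superlevel sets) but is not essential to the interchangeability mechanism itself, which rests on decomposability of the underlying function space.
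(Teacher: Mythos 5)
The paper does not prove this statement at all: it is quoted verbatim as a special case of \citet[Theorem~3A]{rockafellar1976integral}, so any proof you give is necessarily ``different'' from the paper's treatment. What you have written is essentially a reconstruction of the standard argument from that reference --- measurable selection to approximate the pointwise supremum, followed by a decomposability/truncation step to land in $L^1$ --- and the overall architecture is sound, including the correct observation that finiteness of the left-hand supremum is what rules out $\int \sup_x g(\omega,x)\,P(d\omega)=+\infty$ and makes the limit passage legitimate. Three technical points deserve tightening. First, your multifunction $\Gamma$ can be empty on the boundary case $G(\omega)=n$, since $\{x\colon g(\omega,x)>n\}$ may have no elements when the supremum equals $n$ and is not exceeded; define the second branch as $\{x\colon g(\omega,x)>n-\varepsilon\}$ (or split the cases at $G(\omega)>n$ versus $G(\omega)\le n$) to guarantee nonemptiness. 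Second, the Kuratowski--Ryll-Nardzewski theorem applies to \emph{closed}-valued measurable multifunctions, whereas your $\Gamma$ is open-valued; the correct tool here is the von Neumann--Aumann selection theorem for multifunctions with measurable graph (which is exactly what normality supplies), or alternatively a selection from suitable closed subsets of $\Gamma(\omega)$. Third, the parenthetical claim that convexity yields ``open convex superlevel sets'' is wrong: superlevel sets of a convex function are generally not convex (sublevel sets are); your broader point that convexity is inessential to the interchange mechanism is nevertheless correct, since Rockafellar's Theorem~3A holds for arbitrary normal integrands. None of these affects the validity of the strategy, but as written the selection step does not quite go through verbatim.
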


We now establish the dual representation~\eqref{eq:dual representation} of the divergence risk measure.
\begin{proof}[Proof of Theorem~\ref{thm:dual representation}]
	Let $X \in L^{\psi}$ and $Z \in M_{\varphi, \beta}$ be given. By applying the Fenchel--Young inequality~\eqref{eq:Fenchel-Young} inside of the objective function in~\eqref{infdef} we get for $Z\in M_{\varphi,\beta}$ that
	\begin{align*}
		t \left( \beta + \mu + \E \psi \left( \frac{X}{t} - \mu \right)\right) 
		&\ge t \left( \beta + \mu + \E \left(\frac{X}{t} - \mu  \right)Z - \varphi(Z) \right) \\ 
		&\ge t \, \mu + t \, \beta  - t \, \mu \E Z - t \E \varphi(Z) + \E X \, Z \ge \E X \, Z,
	\end{align*}
	provided that $t> 0$ and $\mu \in \mathbb{R}$. Taking the infimum among all $t>0$, $\mu\in\mathbb R$ on the left hand side and the supremum for all $Z\in M_{\varphi,\beta}$ on the right hand side it follows that
	\begin{equation}\label{eq:weak duality}
		\infty >	\rho_{\varphi, \beta} (X) \ge \sup_{Z \in M_{\varphi, \beta}} \E X \, Z.
	\end{equation}
	This is the first inequality required~\eqref{eq:dual representation}.
	
	As for the converse observe that the constant random variable $Z \equiv 1$ is feasible and satisfies $\E \varphi(Z) = \varphi(1) = 0 < \beta$. This is, as stated in \citet[p.~236 Problem~7]{Luenberger:104246}, a sufficient condition for strong duality for the right problem in~\eqref{eq:dual representation}, i.e., there exist Lagrange multipliers $\mu^{\ast} \in \mathbb{R}$ and $t^{\ast} \ge 0$ such that
	\begin{equation}\label{eq:strong duality}
	 \sup_{Z \in M_{\varphi, \beta}} \E X \, Z = \sup_{Z \ge 0} \E X \, Z - \mu^{\ast} \left( \E Z - 1\right) - t^{\ast} \left( \E \varphi(Z) - \beta \right).
	\end{equation}
	Further, by employing $\inf_{x \ge 0} \varphi(x) > - \infty$ and substituting $t \, \bar{\mu} = \mu$ we have that
	\begin{align*}
	\sup_{Z \ge 0} \E X \, Z - \mu^{\ast} \left( \E Z - 1\right) - t^{\ast} \left( \E \varphi(Z) - \beta \right) 
	&\ge \inf_{\substack{ \mu \in \mathbb{R}, \\ t> 0}} \sup_{Z \ge 0} \E X \, Z - \mu \left( \E Z - 1\right) - t \inf_{x \ge 0} \varphi(x)  + t \, \beta  \\
	&\ge \inf_{\substack{ \mu \in \mathbb{R}, \\ t> 0}} \sup_{Z \ge 0} \E X \, Z - \mu \left( \E Z - 1\right) - t \left( \E \varphi(Z) - \beta \right) \\
	&= 	\inf_{ \substack{ \bar{\mu} \in \mathbb{R}, \\ t> 0}} t \left( \bar{\mu} + \beta + \sup_{Z \ge 0} \E \left( \left( \frac{X}{t} - \bar{\mu} \right)Z - \varphi(Z) \right) \right) \\
	&= 	\inf_{ \substack{ \bar{\mu} \in \mathbb{R}, \\ t> 0}} t \left( \bar{\mu} + \beta + \sup_{Z \in L^{1}} \E \left( \left( \frac{X}{t} - \bar{\mu} \right)Z - \varphi(Z) \right) \right),
	\end{align*}
	where the last equality follows from the condition $\varphi(z) = \infty$ for $z < 0$.
	Now observe that the inner function $f(\omega, z) := \left(\frac{X(\omega)}{t} - \bar{\mu} \right)z - \varphi(z)$ is a normal convex integrand, as $\varphi$ is lower semicontinuous and  $\interior \dom (\varphi) = (0,\infty) \neq \emptyset$. Moreover, as $X \in L^{\psi}$, it follows from~\eqref{eq:Fenchel-Young} that
		\begin{align*}	\sup_{Z \in L^{1}} \E \left( \left( \frac{X}{t} - \mu \right)Z - \varphi(Z) \right) \le  \E \psi \left( \frac{X}{t} - \mu \right) < \infty
		 \end{align*}
		 for some $\mu \in \mathbb{R}$ and $t > 0$. Therefore, by inserting Theorem~\ref{thm:dual representation}, we have that
		 \begin{align*}
	\sup_{M_{\varphi, \beta}} \E X \, Z
	    &\ge	\inf_{ \substack{ \mu \in \mathbb{R}, \\ t> 0}} t \left( \mu + \beta + \sup_{Z \in L^{1}} \E \left( \frac{X}{t} - \mu \right)Z - \varphi(Z) \right) \\
		&=  \inf_{ \substack{ \mu \in \mathbb{R}, \\ t> 0}} t \left( \mu + \beta +  \E \left(\sup_{z \in \mathbb{R}} \left( \frac{X}{t} - \mu \right)z - \varphi(z) \right) \right) 
		=  \inf_{ \substack{ \mu \in \mathbb{R}, \\ t> 0}} t \left( \mu + \beta +  \E \psi \left( \frac{X}{t} - \mu \right)   \right),
	\end{align*}
	which is the desired inequality. This completes the proof.
	\end{proof}

\subsection{Consequences of the dual representation}
The $\varphi$-divergence risk measures derive its name from their relation to $\varphi$ divergence.
We provide this relation now explicitly and investigate the dual representation.
We further relate the dual representation~\eqref{eq:dual representation} to Haezendonck risk measures.
\begin{remark}[Alternative dual representation]
Let $M_{\varphi, \beta}$ as in~\eqref{eq:dual set} and $Z \in M_{\varphi, \beta}$. The random variable $Z$ satisfies $ Z \ge 0$ and $\E Z = 1$. 
Therefore $Q_Z$ defined as
	\begin{equation*}
		Q_Z(B) :=\E_P \one_B \, Z 
	\end{equation*}
is a probability measure. $Q_Z$ is absolutely continuous with respect to $P$ and Radon--Nikodym derivative $\frac{d Q_Z}{ d P} = Z$. 
Hence we can reformulate the dual representation~\eqref{eq:dual representation} as
	\begin{equation}\label{eq:alternativedual}
		\rho_{\varphi, \beta} (X) = \sup_{ Q \ll P} \left\{ \E_{Q} X \colon D_{\varphi}(Q \parallel P) 	
	\le \beta \right\},
	\end{equation}
where $D_{\varphi}(Q \parallel P)$ is the $\varphi$-divergence defined in Remark~\ref{rem:phi-divergence}.
$\rho_{\varphi, \beta}(X)$ can therefore be interpreted as the largest expected value $\E_Q X$ over all probability measures $Q$ within a $\varphi$-divergence ball around $P$.	
The divergence function $\varphi$ characterizes the shape of the ball, while $\beta$ determines the radius.
\end{remark}
\begin{remark}[Relationship with Haezendonck risk measures]
Suppose $\varphi$ is a Young-function as in Definition~\ref{def:OSN}. Then the dual representation in~\eqref{eq:dual representation} rewrites as 
\begin{equation*}
 \rho_{\varphi, \beta} (X) =	\left\{ \E X \, Z \colon Z \ge 0 , \, \E Z = 1 , \, \|Z\|_{ \left( \tilde{\varphi} \right)} \le 1 \right\}
\end{equation*}
where $\tilde{\varphi}$ is the function $\tilde{\varphi}(\cdot) = \frac{1}{\beta} \varphi(\cdot)$ and $\|\cdot\|_{\left( \tilde{\varphi} \right)}$ the corresponding Luxemburg norm~\eqref{eq:Luxemburg norm}.
The dual norm of $\|\cdot\|_{\left( \tilde{\varphi} \right)}$ is the Orlicz norm $\| \cdot \|_{\tilde{\psi}}$, cf.~\eqref{eq:Orlicz norm}, where $\tilde{\psi}$ is the associated convex conjugate. 
Interchanging $\| \cdot \|_{(\tilde{\varphi})}$ by $\| \cdot \|_{\tilde{\psi}}$ we get 
\begin{equation*}
	\rho (X) = \left\{ \E X \, Z \colon Z \ge 0, \, \E Z = 1 , \, \|Z\|_{\tilde{\psi}} \le 1 \right\},
\end{equation*}
which is the dual representation of the so-called \emph{Haezendonck--Goovaerts} risk measure (see \citet[Proposition 4]{Bellini2012}). It therefore turns out that the Haezendonck--Goovaerts risk measures are the natural dual counterparts of the $\varphi$-divergence risk measures, as the corresponding feasible sets are determined by norms which are dual to each other. For more information on Haezendonck--Goovaerts risk measures see \citet{Bellini2008986}, \citet{Bellini2012} and \citet{Goovaerts2012}.
\end{remark}

Employing the dual representation we derive a simple condition when the infimum in~\eqref{infdef} is attained.
\begin{prop}[Existence of minimizers]\label{prop: attainability}
	Let $X \in L^{\psi}$ and $\bar{\alpha}$ be given by
	 \begin{equation}\label{eq:alphamax}
		\bar{\alpha} := \max \left\{ \alpha \in [0,1) \colon \varphi(0) \, \alpha +  \varphi \left( \frac{1}{1 -\alpha} \right) \, (1- \alpha ) \le \beta \right\}.
	\end{equation}
	If \begin{equation}\label{eq:supprop}P \left( X = \esssup (X) \right) < 1-\bar{\alpha} \end{equation}
	holds true, then the infimum in the defining equation of the risk measure~\eqref{infdef} is attained.
\end{prop}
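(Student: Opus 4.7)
My plan is to prove attainment by combining the dual representation from Theorem~\ref{thm:dual representation} with a coercivity analysis of the primal objective $f(t,\mu) := t\bigl(\beta+\mu+\E\psi(X/t-\mu)\bigr)$. The guiding idea is that the only direction in which coercivity of $f$ can fail is $t \to 0^+$, and the limiting value of $f$ along this direction equals $s := \esssup X$. Hence, if one first establishes that $\rho_{\varphi,\beta}(X) < s$ under the hypothesis, every minimizing sequence must stay in a compact subset of $(0,\infty)\times\mathbb{R}$, and lower semicontinuity of $f$ produces a genuine minimizer.

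The first step is to show $\rho_{\varphi,\beta}(X) < s$. Invoking Theorem~\ref{thm:dual representation}, I would suppose by contradiction that there exist $Z_n \in M_{\varphi,\beta}$ with $\E X Z_n \to s$. The identity $\E(s-X)Z_n = s - \E X Z_n \to 0$ together with $s - X \ge \delta$ on $\{X \le s-\delta\}$ forces $\E[Z_n \one_{\{X \le s - \delta\}}] \to 0$ for every $\delta > 0$. Jensen's inequality applied to the bipartition $\{X > s-\delta\}$ versus $\{X \le s-\delta\}$ then yields
\[
\E\varphi(Z_n) \;\ge\; p_\delta\,\varphi\!\bigl(a_n/p_\delta\bigr) + (1-p_\delta)\,\varphi\!\bigl(b_n/(1-p_\delta)\bigr),
\]
with $p_\delta := P(X > s-\delta)$, $a_n := \E[Z_n\one_{\{X > s-\delta\}}] \to 1$ and $b_n := 1 - a_n \to 0$. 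Sending $n \to \infty$ and then $\delta \to 0^+$, the right-hand side tends to $(1-\alpha)\,\varphi\!\bigl(1/(1-\alpha)\bigr) + \alpha\,\varphi(0)$ with $\alpha := 1 - P(X = s)$, using continuity of $\varphi$ on $(0,\infty)$ in the non-degenerate case $P(X=s)>0$ and the growth condition~\eqref{eq: growth condition} in the degenerate case $P(X=s)=0$. Since $\alpha > \bar\alpha$ by~\eqref{eq:supprop}, the maximality of $\bar\alpha$ in~\eqref{eq:alphamax} forces this limit to strictly exceed $\beta$, contradicting $\E\varphi(Z_n) \le \beta$.

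For the coercivity analysis I would work in the substituted variables $(t,\nu)$ with $\nu := t\mu$, so that the objective becomes $h(t,\nu) = t\beta + \nu + t\,\E\psi\!\bigl((X-\nu)/t\bigr)$. Because $\dom\varphi = [0,\infty)$, the recession function of $\psi$ is $\psi^\infty(y) = 0$ for $y \le 0$ and $\psi^\infty(y) = +\infty$ for $y > 0$. Combined with Fatou's lemma and the perspective limit $t\,\psi(y/t) \to \psi^\infty(y)$ as $t \to 0^+$, any sequence with $t_n \to 0^+$ and $\nu_n \to \nu^*\in[-\infty,+\infty]$ obeys $\liminf h(t_n,\nu_n) = +\infty$ if $\nu^* < s$ and $\liminf h(t_n,\nu_n) \ge \nu^* \ge s$ otherwise. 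Superlinearity of $\psi$, applied via bounds $\psi(y) \ge Ky - C_K$ for arbitrary $K>0$, rules out escape along $\nu_n \to -\infty$; the bound $\psi \ge -\varphi(0)$ controls $\nu_n \to +\infty$ with $t_n$ bounded; and $\psi(y) \ge y$ gives $h \ge t\beta + \E X$, coercive as $t \to \infty$. In every boundary escape of $(0,\infty)\times\mathbb{R}$ we thus find $\liminf h \ge s$.

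Combining both steps, an infimizing sequence satisfies $h(t_n,\nu_n) \to \rho_{\varphi,\beta}(X) < s$, so it cannot approach the boundary and hence admits a subsequence converging to some $(t^*,\nu^*)\in(0,\infty)\times\mathbb{R}$. Lower semicontinuity of $h$ (convexity plus Fatou on the $\E\psi$ term) gives $h(t^*,\nu^*) = \rho_{\varphi,\beta}(X)$, which is the desired attainment. I expect the main obstacle to be the Jensen-based contradiction in the first step: controlling the double limit $n \to \infty$ then $\delta \to 0^+$ relies on continuity of the perspective $x \mapsto x\,\varphi(1/x)$ at the value $p = P(X=s)$, and the degenerate case $p = 0$ must be treated separately in order to activate~\eqref{eq: growth condition}.
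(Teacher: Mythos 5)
Your proposal is correct, and while its first half runs parallel to the paper's argument, the second half takes a genuinely different route. For the inequality $\rho_{\varphi,\beta}(X)<\esssup X$ the paper argues directly and quantitatively: it splits the quantile function of a feasible $Z$ at a level $\alpha\in\bigl(\bar\alpha,\,1-P(X=\esssup X)\bigr)$, uses Jensen's inequality together with the strict inequality $\varphi(0)\,\alpha+(1-\alpha)\,\varphi\bigl(\tfrac{1}{1-\alpha}\bigr)>\beta$ to extract a constant $c>0$, uniform over $Z\in M_{\varphi,\beta}$, bounding the lower conditional mean of $Z$ away from zero, and then invokes the Hardy--Littlewood (covariance) inequality to dominate $\E X\,Z$ by a convex combination strictly below $\esssup X$. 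Your contradiction argument via concentration of $Z_n$ on $\{X>s-\delta\}$ rests on the same two ingredients (Jensen on a bipartition plus maximality and convexity of $\alpha\mapsto\varphi(0)\alpha+(1-\alpha)\varphi(\tfrac{1}{1-\alpha})$), so this part is essentially a qualitative reformulation; you correctly isolate the degenerate case $P(X=s)=0$, and you should also record that $\esssup X=+\infty$ makes this step vacuous, since $X\in L^{\psi}$ already yields $\rho_{\varphi,\beta}(X)<\infty$. The real divergence is in the second half: the paper asserts that non-attainment forces $t\to0$, identifies $t^{\ast}=0$ as the Lagrange multiplier in the strong-duality identity~\eqref{eq:strong duality}, and reads off $\rho_{\varphi,\beta}(X)=\sup_{Z\ge0}\E X\,Z-\mu^{\ast}(\E Z-1)=\esssup(X)$, whereas your coercivity analysis in the perspective variables $(t,\nu)=(t,t\mu)$ --- using the recession function $\psi^{\infty}$, Fatou's lemma with the integrable minorant $X-\nu$, and the superlinear minorants $\psi(y)\ge Ky-C_K$ --- establishes the same dichotomy from first principles and closes with lower semicontinuity on a compact sublevel set. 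Your route is more self-contained and arguably more rigorous: it makes explicit why escape along $\mu\to\pm\infty$ or $t\to\infty$ is impossible and why the only dangerous boundary value is $\esssup X$, two points the paper's proof leaves implicit.
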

\begin{proof}
 The assertion is shown in two parts. The first part demostrates $\rho_{\varphi, \beta}(X) < \esssup(X)$ while the second establishes that $\rho_{\varphi, \beta}(X) = \esssup(X)$ holds if the infimum is not attained. The assertion then follows by contradiction.
 
  To prove the first part let $M_{\varphi, \beta}$ as in~\eqref{eq:dual set}, $\bar{\alpha}$ as in~\eqref{eq:alphamax} and $X \in L^{\psi}$ as in~\eqref{eq:supprop} be given. We choose ${Z \in M_{\varphi, \beta}}$, $\alpha \in (\bar{\alpha}, 1- P \, (X = \esssup(X)))$ and $U$ uniform distributed on $[0,1]$. We further set  $\mu^{Z}_{\alpha} :=  \E \left(F^{-1}_{Z}(U) \, \middle\vert \,  0 \le U < \alpha \right)$ and $\mu^{Z}_{1-\alpha} :=  \E \left(F^{-1}_{Z}(U) \, \middle\vert \,  \alpha \le U \le 1 \right)$. As $F^{-1}_{Z}(U)$ and $Z$ are identically distributed it follows that
\begin{align*}
    1 = \E(F^{-1}_{Z}(U)) =   \mu^{Z}_{\alpha} \, P (0 \le U < \alpha) +   \mu^{Z}_{1-\alpha} \, P (\alpha \le U \le 1) =   \mu^{Z}_{\alpha} \, \alpha +   \mu^{Z}_{1-\alpha} \, (1- \alpha)
\end{align*}
and
\begin{align*}
	\beta \ge	\E \left(\varphi(F^{-1}_{Z}(U)) \right) &=    \E \left(\varphi(F^{-1}_{Z}(U)) \, \middle\vert \,  0 \le U < \alpha \right)  \alpha +  \E \left(\varphi(F^{-1}_{Z}(U)) \, \middle\vert \, \alpha \le U \le 1 \right)   (1 - \alpha) \\
	&\ge  \varphi \left(\mu^{Z}_{\alpha} \right) \alpha  +   \varphi \left( \mu^{Z}_{1-\alpha} \right) (1 - \alpha) =   \varphi \left(\mu^{Z}_{\alpha} \right) \alpha  +    \varphi \left( \frac{1 - \alpha \, \mu^{Z}_{\alpha}}{1-\alpha} \right) (1 - \alpha)
\end{align*}
where we employed Jensen's inequality to obtain the second inequality. Additionally, by the definition of $\bar{\alpha}$ in~\eqref{eq:alphamax}, we have that \begin{align*}
     \varphi(0) \, \alpha  +  \varphi \left(\frac{1}{1- \alpha} \right) \, (1- \alpha ) > \varphi(0) \, \bar{\alpha} + \varphi \left(\frac{1}{1- \bar{\alpha}} \right) \left(1- \bar{\alpha} \right) 
    = \beta.
\end{align*}
From this and the continuity of $\varphi$ we conclude that there exists a positive constant $c$, not depending on $Z$, such that $\mu^{Z}_{\alpha} \ge c$ holds for every $Z \in M_{\varphi, \beta}$. Hence, by employing the covariance inequality in \citet[Theorem~4]{DhaeneWang}, it follows that
\begin{align*}
	\E X \, Z &\le \int^{1}_{0} F^{-1}_{X}(u) \, F^{-1}_{Z}(u) \, du 
	= \int^{\alpha}_{0} F^{-1}_{X} (u) \,F^{-1}_{Z}(u) \, du + \int^{1}_{\alpha} F^{-1}_{X}(u) \, F^{-1}_{Z}(u) \, du  \\ 
	&\le  F^{-1}_{X} (\alpha)\left( \int^{\alpha}_{0} F^{-1}_{Z}(u) \, du \right) + F^{-1}_{X}(1) \left( \int^{1}_{\alpha}F_{Z}(u) \, du \right) \le  F^{-1}_{X} (\alpha) \, \alpha \, c +  F^{-1}_{X}(1) \, (1- \alpha \, c )
\end{align*}
and consequently
\begin{align*}
	\rho_{\varphi, \beta}(X) = \sup_{Z \in M_{\varphi, \beta}} \E X \, Z \le  F^{-1}_{X} (\alpha) \, \alpha \, c  +    F^{-1}_{X}(1) \, (1-\alpha \, c)   < F^{-1}_{X}(1) = \esssup(X),
\end{align*}
which demonstrates the first part. 

For the second note that the infimum in~\eqref{infdef} is \emph{not} attained if and only if $t$ inside of 
\begin{align*}
\inf_{\substack{\mu \in \mathbb{R} \\ t> 0}} t \left( \beta + \mu + \E \psi \left( \frac{X}{t} - \mu  \right)\right)
\end{align*} tends towards $0$. Hence we have $t^{\ast} = 0$ for the Lagrange multiplier $t^{\ast}$ in~\eqref{eq:strong duality}. It thus follows that \begin{align*}
\rho_{\varphi, \beta} (X) = \sup_{M_{\varphi, \beta}} \E X \, Z = \sup_{Z \ge 0} \E X \, Z - \mu^{\ast} ( \E Z - 1) = \esssup (X).
\end{align*}
 This completes the proof.
\end{proof}
\subsection{Spectral representation}
The $\varphi$-divergence risk measure $\rho_{\varphi, \beta }$ is coherent and law-invariant and thus has a Kusuoka representation (\citet{Kusuoka}). 
We give the representation in terms of spectral risk measures, which is equivalent to the Kusuoka representation.
We derive this representation from the dual~\eqref{eq:dual representation} based on the general approach elaborated in \citet{ShapiroAlois}.  

\begin{prop}[Spectral representation]\label{prop:kusuoka representation}
	The spectral representation of a $\varphi$-divergence risk measure $\rho_{\varphi, \beta}$ for $X \in L^{\psi}$ is
	\begin{equation}\label{eq:kusuoka representation}
	 \rho_{\varphi, \beta}(X) = \sup_{\sigma} \int_{0}^{1}\sigma(u) \, F_X^{-1}(u) \, du,
	\end{equation}
	where the supremum is taken over all non-decreasing $\sigma \colon [0 , 1] \to [0, \infty]$ with $\int_{0}^{1} \sigma(u)\, du = 1$ and 
	\begin{equation*}
		\int_{0}^{1} \varphi \big(\sigma(u) \big)\,d u \le \beta.
	\end{equation*} 
\end{prop}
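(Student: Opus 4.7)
The plan is to derive the spectral representation directly from the dual representation in Theorem~\ref{thm:dual representation} by parameterizing the dual variables $Z \in M_{\varphi,\beta}$ through their quantile functions, using the Hardy--Littlewood rearrangement inequality together with the law-invariance of the constraints defining $M_{\varphi,\beta}$. Throughout we assume that the underlying probability space is rich enough to support a uniformly distributed random variable (the standard setting for Kusuoka-type results).

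\textbf{Upper bound.} Take any $Z \in M_{\varphi,\beta}$ and let $\sigma := F_Z^{-1}$ be its (left-continuous) quantile function. Since $Z \ge 0$, $\sigma$ is non-decreasing with values in $[0,\infty]$, and
\[
    \int_0^1 \sigma(u)\,du = \E Z = 1, \qquad \int_0^1 \varphi\big(\sigma(u)\big)\,du = \E\varphi(Z) \le \beta,
\]
so $\sigma$ is feasible in~\eqref{eq:kusuoka representation}. The Hardy--Littlewood rearrangement inequality gives $\E X Z \le \int_0^1 F_X^{-1}(u)\,\sigma(u)\,du$. Taking the supremum over $Z \in M_{\varphi,\beta}$ and invoking Theorem~\ref{thm:dual representation} yields $\rho_{\varphi,\beta}(X) \le \sup_\sigma \int_0^1 \sigma(u) F_X^{-1}(u)\,du$.

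\textbf{Lower bound.} For the converse inequality let $\sigma$ be any admissible spectrum in~\eqref{eq:kusuoka representation}. Choose a uniformly distributed random variable $U$ on $[0,1]$ such that $X = F_X^{-1}(U)$ almost surely, which is possible because the space is atomless (or by enlarging it). Define $Z := \sigma(U)$. Then $Z \ge 0$, $\E Z = \int_0^1 \sigma(u)\,du = 1$, and $\E\varphi(Z) = \int_0^1 \varphi(\sigma(u))\,du \le \beta$, so $Z \in M_{\varphi,\beta}$. Since $F_X^{-1}$ and $\sigma$ are both non-decreasing (comonotone via $U$),
\[
    \E X Z = \E\, F_X^{-1}(U)\,\sigma(U) = \int_0^1 F_X^{-1}(u)\,\sigma(u)\,du.
\]
By Theorem~\ref{thm:dual representation} this quantity is dominated by $\rho_{\varphi,\beta}(X)$, which establishes the reverse inequality.

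\textbf{Main difficulty.} The routine part is the rearrangement step; the genuine subtlety lies in the construction of the comonotone coupling $Z = \sigma(U)$ when $\sigma$ is merely an extended-real quantile function and when $F_X$ may have atoms. The growth assumption~\eqref{eq: growth condition} combined with $\int_0^1 \varphi(\sigma(u))\,du \le \beta$ forces $\sigma$ to be finite almost everywhere, so $Z$ is well defined as an $L^1$ random variable; and the existence of $U$ with $X = F_X^{-1}(U)$ a.s.\ (even in the presence of atoms) is ensured by a standard randomization argument, which is the only technical ingredient beyond the dual representation.
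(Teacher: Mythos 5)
Your proposal is correct and follows exactly the route the paper indicates (it gives no written proof, only a pointer to deriving the spectral form from the dual representation \eqref{eq:dual representation} via the general rearrangement/comonotonicity approach of the cited reference): Hardy--Littlewood for the upper bound, a comonotone coupling $Z=\sigma(U)$ for the lower bound. The only implicit hypothesis you rely on --- an atomless (or suitably rich) probability space for the coupling --- is the standard standing assumption for Kusuoka-type results and is the same one the paper uses tacitly elsewhere (e.g.\ in the proof of Proposition~\ref{prop: attainability}), so there is no gap.
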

\begin{remark}
Every functional of the shape 
\begin{equation*}\label{eq:distortion risk measure}
	\rho_{\sigma} (X) = \int_{0}^{1} \sigma(u) \, F_X^{-1}(u)\,du,
\end{equation*} 
where $\sigma \colon [0 , 1] \to [0, \infty]$ is non-decreasing with $\int_{0}^{1} \sigma(u) \, du = 1$, is a coherent risk measure itself.
It is called \emph{distortion risk measure} in~\cite{Pflug2006} or \emph{spectral risk measure} in~\citet{Acerbi2002a}.
\end{remark}
The spectral representation~\eqref{eq:kusuoka representation} is beneficial to derive bounds as
\begin{equation*}
	\rho_{\sigma}(X) \le \rho_{\varphi, \beta}(X) \quad \text{for all } X \in L^{\psi}.
\end{equation*}
We provide an example next.
\begin{example}[$\AVaR$ bound]\label{exmp: AVarbound}
    For some fixed $\alpha \in (0,1)$ we set $\sigma_{\alpha}(\cdot)= \frac{1}{1 - \alpha} \one_{[\alpha,1]}(\cdot)$. The associated distortion risk measure is
	\begin{equation*}\label{eq:AVaR}
		\rho_{\sigma_{\alpha}} (X) = \int_{0}^{1} \sigma(u) \, F_X^{-1}(u) \, du = \frac{1}{1 - \alpha} \int_{\alpha}^{1}  F_X^{-1}(u)  \, du
	\end{equation*} 
	which is called \emph{Average Value-at-Risk} and denoted as $\AVaR_{\alpha}(X)$. If
	\begin{equation}\label{eq:betaskal}
    \int_{0}^{1} \varphi \left(\sigma_{\alpha}(u) \right) \, du = \varphi(0) \, \alpha  + \varphi \left( \frac{1}{1 -\alpha} \right) \left(1- \alpha \right) \le \beta
	\end{equation}
    holds, then $\sigma_{\alpha}$ is contained in the set of functions, over which the supremum on the left side of~\eqref{eq:kusuoka representation} is taken. We hence obtain 
	\begin{equation*}\label{eq: distortion bound}
		  \AVaR_{\alpha}(X) = \rho_{\sigma_{\alpha}}(X) \le \rho_{\varphi, \beta}(X) \quad \text{ for all } X \in L^{\psi}
	\end{equation*}
    for every $\alpha$ such that~\eqref{eq:betaskal} is satisfied. Therefore, by inserting definition of $\bar{\alpha}$ in~\eqref{eq:alphamax}, we have that
	\begin{equation*}\label{eq: AVaR Bound}
		\AVaR_{\alpha}(X) \le \rho_{\varphi, \beta} (X),  \qquad \alpha \le\bar\alpha.
	\end{equation*}

	The latter inequality is of importance, as the Average Value-at-Risk is the most important risk measure in finance and in insurance.
	The inequality generalizes a corresponding inequality for the Entropic Value-at-Risk, cf.\ \citet[Proposition~3.2]{AhmadiJavidEVaR}.
\end{example}

\section{Characterization of the dual and applications}\label{sec:Dual}

	The Banach space $L^\psi$ is, by Proposition~\ref{prop:dualraeume}, not reflexive, in general.
	By James's theorem, there are continuous linear functionals, which do not attain their supremum on the closed unit ball.
	This section characterizes functionals of the dual, which attain their supremum on the closed unit ball.
	We characterize the optimal dual random variables in~\eqref{eq:dual representation} by an explicit relation to optimality of~$t$ and~$\mu$ in the defining equation~\eqref{infdef}.
	We further establish an explicit representation of the dual norm of $\| \cdot \|_{\varphi, \beta}$. 
	We further specify conditions so that the optimal values in~\eqref{infdef} can be derived based on a system of equations.
	
	$\varphi$-divergence risk measures are efficiently incorporated into portfolio optimization problems.
	We demonstrate this property in an explicit example.

\subsection{Characterizing equations}\label{subsec:6.1}
	To elaborate optimality inside of~\eqref{infdef} 
	and~\eqref{eq:dual representation}, we state some facts concerning the `derivatives' of the convex function $\varphi$ and its conjugate $\psi$.
	Even though they are not necessarily differentiable, they have subderivatives $\varphi^\prime$ and $\psi^\prime$ (see~\citet[Theorem~
	2.3.12]{BotGradWanka2009},~\citet[Theorem~23.4]{Rockafellar1970}). 
	These are functions, satisfying the equivalent relations
	\begin{equation}\label{eq:subdifferential}
	\psi^{\prime}(x) \, ( z - x ) \le \psi(z) - \psi(x) \quad \text{and} \quad
    \varphi^\prime(y) \, ( z - y ) \le \varphi(z) - \varphi(y), 
	\end{equation}
	and
	\begin{equation}\label{eq: Fenchel-Young eq}
	x \, \psi^\prime(x) = \varphi(\psi^\prime(x)) + \psi(x) \quad \text{and} \quad y \, \varphi^\prime(y) = \varphi(y) + \psi(\varphi^\prime(y))
	\end{equation} 
	for all $x,z \in \mathbb{R}, y \ge 0$. 
	The subderivatives $\varphi^\prime$ and $\psi^\prime$ are, in general, not unique.
	Nevertheless, they are uniquely determined, 
	except for at most countably many points.
    Any function satisfying~\eqref{eq:subdifferential} is non-decreasing and therefore measurable. Hence the system of equations
    \begin{align}
 		1     &= \E  \psi^{\prime} \left( \frac{X}{t} - \mu \right)  \label{eq:Optgleichung1}, \\
		\beta &= \E  \varphi \left( \psi^{\prime} \left( \frac{X}{t} - \mu \right)\right)   \label{eq:Optgleichung2}
	\end{align}
	is well specified. 
	
	In what follows we demonstrate that solutions of the equations~\eqref{eq:Optgleichung1}--\eqref{eq:Optgleichung2} characterize optimal solutions $t^\ast$ and $\mu^\ast$ in the defining equation~\eqref{infdef}.
	They specify the random variable $Z^\ast$ in the dual space maximizing the functional $\E X\,Z$ among all $Z\in M_{\varphi,\beta}$. 
	\begin{theorem}\label{thm: optimal density}
	Let be $X \in L^{\psi}$, $M_{\varphi, \beta}$ as in~\eqref{eq:dual set} and $\psi^\prime$ satisfying~\eqref{eq:subdifferential}. 
	Suppose $\mu^{\ast} \in \mathbb{R}$ and $t^{\ast}>0$ solve of the characterizing equations~\eqref{eq:Optgleichung1}--\eqref{eq:Optgleichung2}. 
	Then they are the optimal values in~\eqref{infdef}. Furthermore, the random variable 
		\begin{equation*}\label{eq:optimal density}
			Z^{\ast} := \psi^{\prime} \left( \frac{X}{t^{\ast}} - \mu^{\ast} \right) 
		\end{equation*} 
		is optimal in~\eqref{eq:dual representation},  i.e., \begin{align*}
			\sup_{Z \in M_{\varphi,\beta}} \E X \, Z = 	\E X \, Z^{\ast} = t^{\ast} \left( \beta + \mu^{\ast} + \E  \psi \left( \frac{X}{t^{\ast}} - \mu^{\ast}  \right)\right)
		\end{align*}
		and $Z^{\ast} \in M_{\varphi, \beta}$.
	\end{theorem}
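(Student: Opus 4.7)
The plan is to show three things in sequence: first that the candidate random variable $Z^\ast := \psi^\prime(X/t^\ast - \mu^\ast)$ is a feasible point in $M_{\varphi,\beta}$; second that the primal objective at $(t^\ast,\mu^\ast)$ coincides with the dual objective at $Z^\ast$; and third, to combine this identity with the strong duality already established in Theorem~\ref{thm:dual representation} so that the double chain of inequalities forces optimality simultaneously in the infimum~\eqref{infdef} and in the supremum~\eqref{eq:dual representation}.

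For feasibility, I would check the three defining conditions of $M_{\varphi,\beta}$ one by one. The property $\E Z^\ast = 1$ is precisely the first characterizing equation~\eqref{eq:Optgleichung1}. The condition $\E\varphi(Z^\ast) \le \beta$ is even tighter: it holds with equality, thanks to~\eqref{eq:Optgleichung2}. For non-negativity, I would argue that $\psi^\prime(x)\ge 0$ pointwise: the Fenchel--Young equality~\eqref{eq: Fenchel-Young eq}, namely $x\,\psi^\prime(x)=\varphi(\psi^\prime(x))+\psi(x)$, forces $\psi^\prime(x)\in\dom(\varphi)=[0,\infty)$ since $\varphi$ is $+\infty$ on the negative axis. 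Hence $Z^\ast \ge 0$ almost surely and $Z^\ast \in M_{\varphi,\beta}$.

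The key computation, which I expect to be the cleanest part, uses~\eqref{eq: Fenchel-Young eq} pointwise with $x = X(\omega)/t^\ast - \mu^\ast$, giving
\begin{equation*}
    \left(\tfrac{X}{t^\ast}-\mu^\ast\right) Z^\ast = \varphi(Z^\ast) + \psi\!\left(\tfrac{X}{t^\ast}-\mu^\ast\right).
\end{equation*}
Taking expectations, substituting $\E Z^\ast=1$ and $\E\varphi(Z^\ast)=\beta$ from the characterizing equations, and multiplying through by $t^\ast>0$ yields exactly
\begin{equation*}
    \E X\,Z^\ast = t^\ast\left(\beta+\mu^\ast+\E\psi\!\left(\tfrac{X}{t^\ast}-\mu^\ast\right)\right).
\end{equation*}

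To finish, I would chain this identity with two one-sided bounds: on one side, $\sup_{Z\in M_{\varphi,\beta}}\E X Z \ge \E X Z^\ast$ since $Z^\ast$ is feasible; on the other, the primal value at $(t^\ast,\mu^\ast)$ is an upper bound for the infimum $\rho_{\varphi,\beta}(X)$. By Theorem~\ref{thm:dual representation} the infimum and supremum agree, so the whole chain collapses to equality, identifying $(t^\ast,\mu^\ast)$ as optimal primal variables and $Z^\ast$ as the maximiser in~\eqref{eq:dual representation}. The main obstacle I anticipate is only conceptual, namely keeping careful track that $\psi^\prime$ is any measurable selection of the subdifferential: the two characterizing equations and the Fenchel--Young equality~\eqref{eq: Fenchel-Young eq} must be invoked with the \emph{same} selection throughout, which is fine because~\eqref{eq: Fenchel-Young eq} is a pointwise identity valid for every such selection.
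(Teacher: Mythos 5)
Your proposal is correct and follows essentially the same route as the paper: feasibility of $Z^{\ast}$ from the characterizing equations together with $\varphi = \infty$ on the negative axis, the pointwise Fenchel--Young equality~\eqref{eq: Fenchel-Young eq} to identify $\E X\,Z^{\ast}$ with the primal objective at $(t^{\ast},\mu^{\ast})$, and the sandwich via Theorem~\ref{thm:dual representation} to collapse the chain of inequalities. Your explicit justification of $Z^{\ast}\ge 0$ and your remark on using a single measurable selection of $\psi^{\prime}$ throughout merely spell out steps the paper treats as immediate.
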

	\begin{proof}
Let solutions $t^{\ast}>0$, $\mu^{\ast} \in \mathbb{R}$ of~\eqref{eq:Optgleichung1} and~\eqref{eq:Optgleichung2} be given.
The assertion $Z^{\ast} \in M_{\varphi, \beta}$ is immediate by the equations~\eqref{eq:Optgleichung1}, \eqref{eq:Optgleichung2} and the fact that $\varphi(x) = \infty$ holds for $x < 0$. Furthermore, by employing~\eqref{eq: Fenchel-Young eq}, we have that
	\begin{align*}
		\E \left( \frac{X}{t^{\ast}} - \mu^{\ast} \right)  Z^{\ast} - \varphi \left(Z^{\ast} \right)  
		= \E \left( \frac{X}{t^{\ast}} - \mu^{\ast}\right) \psi^\prime \left( \frac{X}{t^{\ast}} - \mu^{\ast} \right) - \varphi \left( \psi^\prime \left( \frac{X}{t^{\ast}} - \mu^{\ast} \right) \right)
	 	= \E \psi \left( \frac{X}{t^{\ast}} - \mu^{\ast} \right).
	\end{align*}
Hence by \eqref{eq:Optgleichung1}, \eqref{eq:Optgleichung2}  and Theorem~\ref{thm:dual representation} it follows that
\begin{align*}
	\rho_{\varphi, \beta}(X) 
	&= \sup_{Z \in M_{\varphi, \beta}} \E X \, Z
	\ge \E X \, Z^{\ast} 
	=  t^{\ast} \left( \frac{\E X \,  Z^{\ast}}{t^{\ast}}  - \mu^{\ast} \left( \E Z^{\ast} - 1\right) - \left( \E \varphi \left(Z^{\ast} \right) - \beta \right)\right)  \\
	&=  t^{\ast} \left( \beta + \mu^{\ast} + \E \left( \frac{X}{t^{\ast}} - \mu^{\ast} \right)Z^{\ast} - \varphi \left(Z^{\ast} \right) \right) 
	= t^{\ast} \left( \beta + \mu^{\ast} + \E  \psi \left( \frac{X}{t^{\ast}} - \mu^{\ast}  \right)\right) \ge \rho_{\varphi, \beta}(X).
\end{align*}
We therefore obtain $\E X \, Z^{\ast} = \rho_{\varphi, \beta}(X)$ as well as
\begin{align*}
	\rho_{\varphi, \beta}(X) = t^{\ast} \left( \beta + \mu^{\ast} + \E  \psi \left( \frac{X}{t^{\ast}} - \mu^{\ast}  \right)\right).
\end{align*}
 Thus $\mu^{\ast}$, $t^{\ast}$ and $Z^{\ast}$ are optimal in~\eqref{infdef} and~\eqref{eq:dual representation}, respectively. 
 This is the assertion.
\end{proof}
\begin{remark}
Note that optimal values $t^\ast$ and $\mu^\ast$ in~\eqref{infdef} may exist, although the characterizing system~\eqref{eq:Optgleichung1}--\eqref{eq:Optgleichung2} can\emph{not} be solved.
The existence of solutions depends on the specific choice of the subderivative $\psi^\prime$.

Nevertheless, further assumption on the random variable $X$ and the function $\psi$ can insure solutions of the system of equations. 
We present the corresponding result in Section~\ref{subsec:3} below.
\end{remark}

\subsection{Dual norm}
This subsection addresses the dual norm 
	\begin{equation}\label{def: dual norm}
		\|Z\|^{\ast}_{\varphi, \beta} := \sup_{\|X\|_{\varphi, \beta} \le 1} \E X \, Z
	\end{equation}	
of the $\varphi$-divergence norms given in~\eqref{eq:norm}. 
In what follows, we characterize~\eqref{def: dual norm} as an optimization problem in one variable, provided that $\varphi$ satisfies the $\Delta_2$-condition~\eqref{eq:Delta2}.
	
Note that $\varphi \in \Delta_2$ implies
	\begin{equation*}
	 	\E \varphi \left(t \, |Z| \right) < \infty \text{ for some } t > 0\ \Longleftrightarrow\ \E \varphi \left(t \, |Z| \right) < \infty \text{ for all } t > 0
	\end{equation*}
as well as $ \left( M^{\psi} \right)^{\ast} \cong  L^\varphi$ and $(L^\varphi)^{\ast} \cong L^\psi$ (see Proposition~\ref{prop:dualraeume}). Thus the expression in~\eqref{def: dual norm} is finite if and only if $Z \in L^{\varphi}$. 

The following lemma states a specific transformation of a random variable $Z \in L^{\varphi}$, which we use later to characterize the dual norm.
	\begin{lemma}\label{lem:dual norm}
	Let $\varphi \in \Delta_2$ and $Z \in L^{\varphi}$. There exists a continuous function $c_Z \colon  [\E |Z| , \infty) \to [0,1]$ such that
	\begin{equation}\label{eq: Erwartungswert Abschneidung}
	\E \max\left\{ c_Z(\lambda), \frac{|Z|}{\lambda} \right\} = 1
	\end{equation}
	for all $\lambda \in [\E |Z| , \infty)$. If $\E \varphi \left(\frac{|Z|}{\E |Z|} \right) > \beta$ in addition, then there is a number $\lambda^{\ast} \in  (\E |Z|, \infty)$ such that
	\begin{equation*}
		\E \varphi \left( \max\left\{c_Z(\lambda^{\ast}), \frac{|Z|}{\lambda^{\ast}}\right\} \right) = \beta
	\end{equation*}
	is satisfied.
\end{lemma}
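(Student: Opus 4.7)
The plan is to prove both assertions by two separate applications of the intermediate value theorem, where the key analytical input is dominated convergence arguments that rely on the $\Delta_2$-hypothesis to secure integrable majorants.

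For the first part, I fix $\lambda \in [\E|Z|,\infty)$ and consider the auxiliary map $g_\lambda(c) := \E\max\{c,|Z|/\lambda\}$ on $[0,1]$. Dominated convergence (with majorant $1+|Z|/\lambda \in L^1$) shows $g_\lambda$ is continuous and it is plainly non-decreasing in $c$. The endpoint values satisfy $g_\lambda(0) = \E|Z|/\lambda \le 1$ and $g_\lambda(1) \ge 1$, so IVT gives at least one $c \in [0,1]$ solving $g_\lambda(c)=1$; I then define $c_Z(\lambda)$ as the minimum such value to obtain a well-defined function (observing that for $\lambda > \E|Z|$ the solution is already unique, since the map is strictly increasing on $(\essinf(|Z|/\lambda), 1]$, where the crossing must occur). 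Continuity of $\lambda \mapsto c_Z(\lambda)$ follows from the joint continuity of $(c,\lambda) \mapsto g_\lambda(c)$ (another dominated convergence argument) combined with the strict monotonicity in $c$: for any sequence $\lambda_n \to \lambda_0 > \E|Z|$, any subsequential limit $c^*$ of $c_Z(\lambda_n)$ satisfies $g_{\lambda_0}(c^*)=1$, which forces $c^* = c_Z(\lambda_0)$. At the boundary $\lambda_0 = \E|Z|$ the choice of minimal solution pins $c_Z(\E|Z|) = 0$, and a similar argument shows $c_Z(\lambda_n) \to 0$ from the right.

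For the second part, I set $h(\lambda) := \E\varphi\bigl(\max\{c_Z(\lambda),|Z|/\lambda\}\bigr)$ and apply IVT again. At $\lambda=\E|Z|$ we get $h(\E|Z|) = \E\varphi(|Z|/\E|Z|) > \beta$ by hypothesis. As $\lambda \to \infty$ we have $c_Z(\lambda) \to 1$ and $|Z|/\lambda \to 0$, so the argument of $\varphi$ converges pointwise to $1$ and $\varphi(\cdot) \to \varphi(1)=0$; the $\Delta_2$-condition together with $Z \in L^\varphi = M^\varphi$ provides an integrable majorant of the form $\varphi(1+|Z|/\E|Z|)$, so dominated convergence yields $h(\lambda) \to 0 < \beta$. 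Continuity of $h$ on $[\E|Z|,\infty)$ follows again from continuity of $c_Z$, continuity of $\varphi$ on $[0,\infty)$, and the same dominated convergence argument (the majorant $\varphi(1+|Z|/\E|Z|)$ works on any subinterval bounded away from $0$). Hence IVT furnishes $\lambda^{\ast} \in (\E|Z|,\infty)$ with $h(\lambda^{\ast}) = \beta$.

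The main obstacle I anticipate is justifying continuity of $c_Z$ cleanly at the left endpoint $\lambda=\E|Z|$, where the defining equation degenerates (the map $g_{\E|Z|}$ is constant on $[0,\essinf(|Z|/\E|Z|)]$ and so $c_Z(\E|Z|)$ is not uniquely determined by the equation itself); the remedy is to fix the convention that $c_Z(\lambda)$ is always the smallest admissible value, which both selects $c_Z(\E|Z|)=0$ and makes the right-continuity compatible with the strict monotonicity available for $\lambda > \E|Z|$. The $\Delta_2$-hypothesis does not enter the first assertion but is indispensable in the second, guaranteeing the integrable majorant needed to pass to the limit $\lambda \to \infty$.
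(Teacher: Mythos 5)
Your overall strategy---two applications of the intermediate value theorem, with dominated convergence supplying continuity and the limit as $\lambda\to\infty$, and with $\Delta_2$ entering only in the second assertion---is exactly the paper's strategy, and your compactness/subsequence argument for continuity of $c_Z$ on the open interval $(\E|Z|,\infty)$ is a clean substitute for the paper's explicit $\varepsilon$--$\delta$ estimates. However, there is a genuine error at the left endpoint. Your convention ``take the smallest admissible value'' gives $c_Z(\E|Z|)=0$, and you assert that $c_Z(\lambda)\to 0$ as $\lambda\downarrow\E|Z|$. That is false whenever $\essinf|Z|>0$. Indeed, for $\lambda>\E|Z|$ any solution of $\E\max\{c,|Z|/\lambda\}=1$ must satisfy $c>\essinf(|Z|/\lambda)$ (below that threshold the expectation equals $\E|Z|/\lambda<1$), so $\liminf_{\lambda\downarrow\E|Z|}c_Z(\lambda)\ge\essinf(|Z|/\E|Z|)$; in fact the right limit equals $\essinf(|Z|/\E|Z|)$. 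Concretely, for $Z$ with $P(Z=1)=P(Z=3)=\tfrac12$ one computes $c_Z(\lambda)=2-3/\lambda$ for $\lambda\in(2,3)$, which tends to $\tfrac12$, not $0$; for constant $Z$ the discrepancy is even starker, since $c_Z(\lambda)=1$ for all $\lambda>\E|Z|$. The paper avoids this by setting $c_Z(\E|Z|):=\essinf(|Z|/\E|Z|)$, i.e.\ the \emph{largest} solution at the degenerate endpoint, which matches the right limit. The error is local and harmless downstream, because $\max\{c,|Z|/\E|Z|\}$ is the same for every admissible $c$ at $\lambda=\E|Z|$, so your function $h$ is still continuous there and the second assertion survives; but as written your construction does not produce the continuous $c_Z$ that the lemma claims.

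One smaller imprecision: the majorant for the dominated convergence step in the second part cannot simply be $\varphi\left(1+|Z|/\E|Z|\right)$, since $\varphi$ may be negative on part of $[0,1]$ and one needs to dominate $\lvert\varphi(\cdot)\rvert$; you should add a constant of the form $\sup_{x\in[0,M]}\lvert\varphi(x)\rvert$ (finite by convexity and $\dom\varphi=[0,\infty)$) to a term like $\varphi\left(\max\{M,|Z|/\E|Z|\}\right)$, whose integrability is where $\Delta_2$ and $L^\varphi=M^\varphi$ are used, exactly as in the paper.
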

\begin{proof}
	To establish the assertion we recall the \emph{intermediate value theorem}, which states that the equation
	\begin{equation*}
	f(x) = y
	\end{equation*}
	has a solution $x^{\ast}$, if $f$ is continuous and there are $x_1,x_2$ such that $f(x_1) \le y \le f(x_2)$. 
	
	Let $Z \in L^{\varphi}$. If $Z$ is constant, the function $c_Z(\lambda) := 1$ satisfies~\eqref{eq: Erwartungswert Abschneidung}.
	We therefore assume that $Z$ is non-constant and consider some fixed $\lambda \in (\E |Z| , \infty)$. 
	Setting $f(c) := \E \max\left\{ c , \frac{|Z|}{\lambda}\right\}$ we have that
	\begin{align*}
		\left|f\left(c_2\right) -f \left(c_1\right) \right|
		= \left|\E \max \left\{c_2 , \frac{|Z|}{\lambda} \right\} - \E \max \left\{c_1 , \frac{|Z|}{\lambda} \right \} \right| 
		\le |c_2 - c_1|
	\end{align*} 
	for all $c_1, c_2 \in \left[ \essinf \left(\frac{|Z|}{\lambda} \right), 1 \right]$. 
	Thus $f$ is Lipschitz continuous and hence continuous.
	Further we have that	
	\begin{align*}
		f \left( \essinf \left(\frac{|Z|}{\lambda} \right) \right) =	\E \left(\frac{|Z|}{\lambda} \right) = \E \max \left\{ \essinf \left(\frac{|Z|}{\lambda} \right) , \frac{|Z|}{\lambda}  \right\} < 1 \le \E \max \left\{ 1 , \frac{|Z|}{\lambda}  \right\} = f(1)
	\end{align*}
	and thus, by employing the intermediate value theorem, $f(c^{\ast}) = 1$ for some $c^{\ast} \in \left( \essinf \left(\frac{|Z|}{\lambda} \right), 1 \right]$.
	Hence~\eqref{eq: Erwartungswert Abschneidung} has for a solution $c^{\ast}(\lambda)$ for every $\lambda \in (\E |Z| , \infty)$, which is unique as $f$ increases strictly on $\left( \essinf \left(\frac{|Z|}{\lambda} \right), 1 \right]$. Therefore the function $c_Z \colon [\E |Z| , \infty) \to [0,1]$ given by
	\begin{equation*}
		c_Z(\lambda):= 
		\begin{cases} 
			\essinf \left( \frac{|Z|}{\E |Z|} \right) & \text{for } \lambda = \E |Z| \\
			c^{\ast}(\lambda) \quad & \text{for } \lambda \in (\E |Z|,\, \infty) 
		\end{cases}
	\end{equation*} 
	is well defined and satisfies~\eqref{eq: Erwartungswert Abschneidung} for every $\lambda \in \left[ \E |Z| , \infty \right)$.
	
	To demonstrate the continuity of $c_Z$, let $\lambda_0 \in (\E |Z| , \infty)$ and $\varepsilon > 0$.
	Without loss of generality we may assume that $\varepsilon$ is sufficiently small such that $p = P \left(\frac{|Z|}{\lambda_0} \le c_Z(\lambda_0) - \epsilon \right) > 0$. 
	Choosing $\delta \le \lambda_0 \, \epsilon \, p$ it follows that
	\begin{align*}
		\E \max \left\{ c_Z(\lambda_0) - \varepsilon , \frac{|Z|}{\lambda} \right\}
		&\le \E \max \left\{ c_Z(\lambda_0) - \varepsilon , \frac{|Z|}{\lambda_0 - \delta} \right\} 
		\le \frac{\lambda_0}{\lambda_0 - \delta} \E \max \left\{ c_Z(\lambda_0) - \varepsilon , \frac{|Z|}{\lambda_0} \right\} \\
		&\le \frac{\lambda_0}{\lambda_0 - \delta} \left( \E \max \left\{ c_Z(\lambda_0) , \frac{|Z|}{\lambda_0} \right\} - \varepsilon \, p \right) = \frac{\lambda_0}{\lambda_0 - \delta} \left( 1 - \varepsilon \, p \right) \le 1
	\end{align*}
	and similarly
	\begin{align*}
		\E \max \left\{ c_Z(\lambda_0) + \varepsilon , \frac{|Z|}{\lambda} \right\}
		&\ge \E \max \left\{ c_Z(\lambda_0) + \varepsilon , \frac{|Z|}{\lambda_0 + \delta} \right\} 
		\ge \frac{\lambda_0}{\lambda_0 + \delta} \E \max \left\{ c_Z(\lambda_0) + \varepsilon , \frac{|Z|}{\lambda_0} \right\} \\
		&\ge \frac{\lambda_0}{\lambda_0 + \delta} \left( \E \max \left\{ c_Z(\lambda_0) , \frac{|Z|}{\lambda_0} \right\} + \varepsilon \, p \right) \ge 1
	\end{align*}
	for all $\lambda \in (\lambda_0 - \delta , \lambda_0 + \delta)$. 
	We thus get that $c_Z(\lambda) \in [c_Z(\lambda_0) - \epsilon, c_Z(\lambda_0) + \epsilon]$ for all $\lambda \in (\lambda_0 - \delta , \lambda_0 + \delta)$, by the intermediate value theorem.
	This establishes the continuity of $c_Z$ on $( \E |Z|, \infty)$. 
	The (right side) continuity in $\lambda = \E  |Z|$ follows from the fact that
	\begin{align*}
	\E  \max\left\{ \essinf \left(\frac{|Z|}{ \E |Z|} \right) + \varepsilon ,\, \frac{|Z|}{ \E |Z|} \right\} > 1 
	\end{align*} holds for every $\epsilon >0$. This demonstrates the first part of the assertion.
	
	For the second we assume $\E \varphi \left(\frac{|Z|}{\E |Z|} \right) > \beta$ and set $g(\lambda) := \E \varphi \left( \max \left\{ c_Z(\lambda), \frac{|Z|}{\lambda} \right\} \right)$.
	By 
	\begin{equation*}
		\E \max \left\{ c_Z(\lambda) , \frac{|Z|}{\lambda} \right\} = 1 \ \text{for all }\ \lambda \in (\E |Z| , \infty),
	\end{equation*} 
	we observe that $\max \left\{ c_Z(\lambda) , \frac{|Z|}{\lambda} \right\} \to 1$ almost surely, for $\lambda \to \infty$.
	It is hence sufficient to show that $g$ is continuous, as then the assertion follows from
	\begin{align*}
	0 = \varphi(1) = \lim _{\lambda \to \infty} g(\lambda) <  \beta < \E \varphi \left(\frac{|Z|}{\E |Z|} \right) = g \left( \E |Z| \right)
	\end{align*}
	and the intermediate value theorem. 
	Let $(\lambda_n)_{n \in \mathbb{N}} \subset (\E |Z| , \infty)$ such that $ \lambda_n \to \lambda_0 \in [ \E |Z|, \infty)$. 
	Choosing a number $M \ge 1$ such that $\varphi$ is non-decreasing and non-negative
	for all $x \ge M$, we have the estimation
	\begin{align}
	\left| \varphi \left( \max \left\{ c_Z(\lambda) , \frac{|Z|}{\lambda}\right\} \right) \right| 
	&\le \sup_{x \in [0,M]} |\varphi(x)| + \varphi \left( \max\left\{M , \frac{|Z|}{\lambda}\right\} \right) \nonumber  \\
	&\le \sup_{x \in [0,M]} |\varphi(x)| + \varphi \left( \max\left\{M , \frac{|Z|}{\E |Z|}\right\} \right)\label{eq: Majorante}
	\end{align}
	for all $\lambda \in [\E |Z| , \infty)$.
	As~\eqref{eq: Majorante} is integrable we can interchange limit and expectation by Lebesgue's Dominated convergence theorem, and thus get
	\begin{align*}
	g(\lambda_0) = \E \varphi \left(\max \left\{ c_Z(\lambda_0) , \frac{|Z|}{\lambda_0} \right\} \right) 
	&= \E \left(\lim_{n \to \infty} \varphi \left(\max \left\{ c_Z(\lambda_n) , \frac{|Z|}{\lambda_n} \right\} \right) \right) \\
	&= \lim_{n \to \infty}  \E \varphi \left(\max \left\{ c_Z(\lambda_n) , \frac{|Z|}{\lambda_n} \right\} \right) = \lim_{n \to \infty} g(\lambda_n)
	\end{align*}
	by Proposition~\ref{prop:1}~\ref{enu:1} and continuity of $c_Z$. This demonstrates continuity of $g$ and consequently the assertion.
	\end{proof}
	The dual norm allows the following explicit expression, which reduces the problem to an optimization exercise in a single variable.
	\begin{theorem}\label{thm: dual norm}
	For $\varphi \in \Delta_2$ and $Z \in L^{\varphi}$ it holds that
		\begin{equation*}
			\|Z\|^{\ast}_{\varphi, \beta} = \inf \left\{ \lambda \ge \E|Z| \colon \E \varphi \left( \max \left\{ c_Z(\lambda) , \frac{|Z|}{\lambda} \right\} \right) \le \beta \right\},
		\end{equation*}
		where $c_Z$ is the function in Lemma~\ref{lem:dual norm}.
	\end{theorem}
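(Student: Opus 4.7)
The plan is to interpret both sides of the claimed identity as an infimum over admissible dominating densities in $M_{\varphi,\beta}$, leveraging the primal--dual machinery of Section~\ref{subsec:6.1}.

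First, using $\|X\|_{\varphi,\beta} = \rho_{\varphi,\beta}(|X|)$ together with $\E X Z \le \E |X|\,|Z|$ (equality achieved by $X = |X|\sign(Z)$), I would rewrite
\[
    \|Z\|^{\ast}_{\varphi,\beta} = \sup\bigl\{ \E X\,|Z| \colon X \ge 0,\ \rho_{\varphi,\beta}(X) \le 1\bigr\}.
\]
For the direction ``$\le$'', take any $\lambda$ admissible on the right-hand side and set $Y^{\ast}_\lambda := \max\{c_Z(\lambda),\, |Z|/\lambda\}$. By Lemma~\ref{lem:dual norm}, $\E Y^{\ast}_\lambda = 1$, so the hypothesis $\E\varphi(Y^{\ast}_\lambda) \le \beta$ gives $Y^{\ast}_\lambda \in M_{\varphi,\beta}$; by construction $Y^{\ast}_\lambda$ dominates $|Z|/\lambda$. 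For every feasible $X \ge 0$, Theorem~\ref{thm:dual representation} then yields
\[
    \E X\,|Z| \le \lambda\, \E X\, Y^{\ast}_\lambda \le \lambda\,\rho_{\varphi,\beta}(X) \le \lambda,
\]
and hence $\|Z\|^{\ast}_{\varphi,\beta} \le \lambda$.

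For the reverse inequality, I would construct an explicit tight test variable. Let $\lambda^{\ast}$ denote the right-hand infimum; continuity from Lemma~\ref{lem:dual norm} combined with $\E\varphi(Y^{\ast}_\lambda) \to \varphi(1) = 0$ as $\lambda \to \infty$ shows that $\lambda^{\ast}$ is attained and $\E\varphi(Y^{\ast}_{\lambda^{\ast}}) \le \beta$ (with equality in the non-degenerate case $\lambda^{\ast} > \E|Z|$). Set $\mu^{\ast} := -\varphi^{\prime}(c_Z(\lambda^{\ast}))$ and
\[
    X^{\ast} := t^{\ast}\bigl( \varphi^{\prime}(Y^{\ast}_{\lambda^{\ast}}) - \varphi^{\prime}(c_Z(\lambda^{\ast})) \bigr) \ge 0,
\]
with $t^{\ast} > 0$ chosen for normalization. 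Then $X^{\ast}/t^{\ast} - \mu^{\ast} = \varphi^{\prime}(Y^{\ast}_{\lambda^{\ast}})$ globally, so by~\eqref{eq: Fenchel-Young eq} $\psi^{\prime}(X^{\ast}/t^{\ast} - \mu^{\ast}) = Y^{\ast}_{\lambda^{\ast}}$, and the characterizing equations~\eqref{eq:Optgleichung1}--\eqref{eq:Optgleichung2} hold. Theorem~\ref{thm: optimal density} identifies $Y^{\ast}_{\lambda^{\ast}}$ as the optimal dual variable, giving $\rho_{\varphi,\beta}(X^{\ast}) = \E X^{\ast} Y^{\ast}_{\lambda^{\ast}}$. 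Since $X^{\ast}$ vanishes exactly where $Y^{\ast}_{\lambda^{\ast}} > |Z|/\lambda^{\ast}$, we obtain $\E X^{\ast}\,|Z| = \lambda^{\ast} \E X^{\ast} Y^{\ast}_{\lambda^{\ast}} = \lambda^{\ast}\,\rho_{\varphi,\beta}(X^{\ast})$, and scaling $t^{\ast}$ so that $\rho_{\varphi,\beta}(X^{\ast}) = 1$ delivers $\|Z\|^{\ast}_{\varphi,\beta} \ge \lambda^{\ast}$.

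The main obstacle I foresee is the careful bookkeeping in the lower-bound construction: $\varphi^{\prime}$ is only defined up to a countable set and need not be single-valued, so one must commit to a measurable selection; the boundary case $\lambda^{\ast} = \E|Z|$ (where $c_Z(\lambda^{\ast})$ may coincide with $\essinf(|Z|/\lambda^{\ast})$) must be treated separately; and one must verify $X^{\ast} \not\equiv 0$ so that the normalization $t^{\ast}$ is legitimate. An alternative route bypassing part of this bookkeeping is Lagrangian duality combined with a Sion-type minimax interchange on the weak-$\ast$ compact set $M_{\varphi,\beta} \subset L^\varphi \cong (M^\psi)^{\ast}$ (available since $\varphi \in \Delta_2$, Proposition~\ref{prop:dualraeume}), reducing the claim to $\|Z\|^{\ast}_{\varphi,\beta} = \inf\{\lambda \colon \exists Y \in M_{\varphi,\beta},\ Y \ge |Z|/\lambda\}$; the pointwise-minimal such $Y$ under $\E Y = 1$ is then identified as $Y^{\ast}_\lambda$ via a convexity/water-filling argument, closing the circle.
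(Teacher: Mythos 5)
Your overall architecture is the paper's: the same dual variable $Y^{\ast}_{\lambda}=\max\{c_Z(\lambda),|Z|/\lambda\}$ for the upper bound, and the same test function $X^{\ast}\propto\max\{0,\varphi^{\prime}(|Z|/\lambda^{\ast})-\varphi^{\prime}(c_Z(\lambda^{\ast}))\}$ with the shift $\mu^{\ast}=-\varphi^{\prime}(c_Z(\lambda^{\ast}))$ for the lower bound, including the key observation that $Y^{\ast}_{\lambda^{\ast}}=|Z|/\lambda^{\ast}$ on $\{X^{\ast}\neq 0\}$. The upper-bound half is correct and even slightly cleaner than the paper's, since your admissibility formulation absorbs the degenerate case $\E\varphi(|Z|/\E|Z|)\le\beta$ that the paper treats separately.

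There is, however, one genuine gap in the lower bound: the step ``by~\eqref{eq: Fenchel-Young eq}, $\psi^{\prime}(X^{\ast}/t^{\ast}-\mu^{\ast})=Y^{\ast}_{\lambda^{\ast}}$'' does not follow. Equation~\eqref{eq: Fenchel-Young eq} gives $Y^{\ast}\in\partial\psi(\varphi^{\prime}(Y^{\ast}))$, not that a chosen selection $\psi^{\prime}$ inverts $\varphi^{\prime}$. If $\varphi$ has an affine segment (which Definition~\ref{def:divergence function} permits), $\varphi^{\prime}$ is constant on an interval, so two distinct values of $Y^{\ast}_{\lambda^{\ast}}$ can produce the same argument $\varphi^{\prime}(Y^{\ast}_{\lambda^{\ast}})$; then \emph{no} single-valued selection $\psi^{\prime}$ satisfies your identity, the characterizing equations~\eqref{eq:Optgleichung1}--\eqref{eq:Optgleichung2} need not hold, and Theorem~\ref{thm: optimal density} cannot be invoked. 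This is more than the ``measurable selection'' bookkeeping you flag. The paper avoids the issue entirely by never inverting $\varphi^{\prime}$: it uses only the valid direction $y\,\varphi^{\prime}(y)=\varphi(y)+\psi(\varphi^{\prime}(y))$ to compute $\E(X^{\ast}+\varphi^{\prime}(c_Z(\lambda^{\ast})))Z^{\ast}=\E\psi(X^{\ast}+\varphi^{\prime}(c_Z(\lambda^{\ast})))+\varphi(Z^{\ast})$, and then sandwiches $\E X^{\ast}Z^{\ast}$ between the weak-duality lower bound~\eqref{eq:weak duality} and the primal objective evaluated at $t=1$, $\mu=-\varphi^{\prime}(c_Z(\lambda^{\ast}))$. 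Your construction survives verbatim once you replace the appeal to Theorem~\ref{thm: optimal density} by this direct sandwich. Two smaller points: in the boundary case $\lambda^{\ast}=\E|Z|$ with $\E\varphi(|Z|/\E|Z|)<\beta$ the sandwich needs $\E\varphi(Z^{\ast})=\beta$ exactly, so you must argue that case separately (the paper does so via $\E Z\sign(Z)=\E|Z|$ and $\|\sign(Z)\|_{\varphi,\beta}\le 1$); and $X^{\ast}$ vanishes on a possibly larger set than $\{Y^{\ast}_{\lambda^{\ast}}>|Z|/\lambda^{\ast}\}$ when $\varphi^{\prime}$ is locally constant, but only the implication $X^{\ast}\neq 0\Rightarrow Y^{\ast}_{\lambda^{\ast}}=|Z|/\lambda^{\ast}$ is needed, and that one holds.
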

\begin{proof}
Let be $Z \in L^{\varphi}$ and $M_{\varphi, \beta}$ as in~\eqref{eq:dual set}. 
If $\E \varphi \left( \frac{|Z|}{ \E |Z|}\right) \le \beta$ holds, we have that $\frac{|Z|}{ \E |Z|} \in M_{\varphi, \beta}$ and therefore 
	\begin{align*}
		\E X \, Z \le \E |Z| \, \frac{ \E |X| \, |Z|}{\E |Z|} \le \E |Z| \sup_{Y \in M_{\varphi ,\beta}} \E |X| \, Y = \E |Z| \, \|X\|_{\varphi, \beta}
	\end{align*}
by Theorem~\ref{thm:dual representation}. 
Hence it holds $\|Z\|^{\ast}_{\varphi, \beta} \le \E|Z|$. 
Conversely, by~\eqref{eq: Erwartungswert Supremum Grenzen}, we get that $\|\sign(Z) \|_{\varphi, \beta} \le 1$  and thus $\|Z\|^{\ast}_{\varphi, \beta} \ge \E|Z|$, as $\E Z \sign(Z) = \E |Z| \ge \E |Z| \, \|\sign(Z) \|_{\varphi, \beta}$. We therefore obtain $\|Z\|^{\ast}_{\varphi, \beta} = \E |Z|$.

Now assume $\E \varphi \left( \frac{|Z|}{ \E Z}\right) > \beta$.
Employing Lemma~\ref{lem:dual norm} we get a number $\lambda^{\ast} \in (\E |Z|, \infty)$ such that
\begin{equation}\label{eq:dual Z}
	\E \max \left\{ c_Z(\lambda^{\ast}) , \frac{|Z|}{\lambda^{\ast}} \right\} = 1 \quad \text{and} \quad \E \varphi \left( \max \left\{ c_Z(\lambda^\ast) , \frac{|Z|}{\lambda^\ast} \right\} \right) = \beta
\end{equation} 
holds.
Setting $Z^{\ast} := \max \left\{ c_Z(\lambda^{\ast}) , \frac{|Z|}{\lambda^\ast} \right\}$, and observing $Z^{\ast} \in M_{\varphi, \beta}$ as well as $\frac{|Z|}{\lambda^{\ast}} \le Z^{\ast}$, it follows from Theorem~\ref{thm:dual representation} that
\begin{align*}
	 \frac{\E X \, Z}{\lambda^{\ast}} \le \frac{\E |X| \, |Z|}{\lambda^{\ast}} \le \E |X| \, Z^{\ast} \le \|X\|_{\varphi, \beta}
\end{align*}
for every $X \in L^{\psi}$. 
We therefore conclude $\|Z\|^{\ast}_{\varphi, \beta} \le \lambda^{\ast}$. 

To establish the converse inequality, we consider $X^\ast := \max \left\{ 0 , \varphi^\prime \left( \frac{|Z|}{\lambda^{\ast}} \right) - \varphi^\prime \left( c_Z (\lambda^{\ast})\right) \right\}$, where $\varphi^\prime$ corresponds to the function in~\eqref{eq:subdifferential}. 
Invoking~\eqref{eq:subdifferential} and~\eqref{eq: Fenchel-Young eq}, we obtain that
\begin{align*}
	\E \psi \left(\varphi^\prime \left( \frac{|Z|}{\lambda^{\ast}} \right)\right) 
	= \E \frac{|Z|}{\lambda^{\ast}} \, \varphi^\prime\left( \frac{|Z|}{\lambda^{\ast}} \right) - \varphi \left( \frac{|Z|}{\lambda^{\ast}} \right) 
	\le  \E \varphi \left( \frac{2 \, |Z|}{\lambda^{\ast}}\right) - 2 \, \varphi \left( \frac{|Z|}{\lambda^{\ast}} \right) < \infty
\end{align*}
as $Z \in L^{\varphi}$ and $\varphi \in \Delta_2$.
Thus $\varphi^\prime\left( \frac{|Z|}{\lambda^{\ast}} \right) \in L^{\psi}$ and consequently $X^{\ast} \in L^{\psi}$.
Further, as $\varphi^\prime$ is non-decreasing, we observe that
\begin{align*}
	X^\ast +\varphi^\prime \left( c_Z(\lambda^{\ast}) \right) = \max \left\{ \varphi^\prime \left( c_Z(\lambda^{\ast}) \right) , \varphi^\prime \left( \frac{|Z|}{\lambda^{\ast}} \right)  \right\} = \varphi^\prime \left( \max \left\{ c_Z(\lambda^{\ast}) , \frac{|Z|}{\lambda^{\ast}} \right\} \right) = \varphi^\prime \left( Z^\ast \right)
\end{align*} 
and hence 
\begin{align*}
	\E (X^\ast + \varphi^\prime( c_Z(\lambda^{\ast})) ) \, Z^{\ast}
  = \E \varphi^\prime(Z^\ast) \, Z^{\ast} 
  &= \E \psi \left( \varphi^\prime(Z^\ast) \right) + \varphi(Z^\ast) \\
  &= \E \psi \left( (X^\ast + \varphi^\prime ( c_Z(\lambda^{\ast})) \right) +        \varphi(Z^\ast)
\end{align*}
by~\eqref{eq: Fenchel-Young eq}. 
Employing this as well as~\eqref{eq:weak duality} and~\eqref{eq:dual Z}, we obtain 
	\begin{align*}
    	\rho_{\varphi, \beta }(X^\ast) 
    \ge \E X^\ast \, Z^\ast  
    &= -\varphi^\prime ( c_Z(\lambda^{\ast})) + \beta + \E \left( \left(X^\ast+ \varphi^\prime ( c_Z(\lambda^{\ast}) \right) Z^{\ast} - \varphi( Z^\ast) \right) \\
    &= - \varphi^\prime ( c_Z(\lambda^{\ast})) + \beta +   \E \psi \left( X^\ast + \varphi^\prime ( c_Z(\lambda^{\ast}) \right) 
    	\ge \rho_{\varphi, \beta}(X^{\ast})
	\end{align*}
and therefore $\rho_{\varphi, \beta }(X^\ast) = \E X^\ast \, Z^\ast$.
Observing that $Z^{\ast}$ equals $\frac{|Z|}{\lambda^{\ast}}$ on the set where $X^{\ast}$ differs from $0$, we finally get that
	\begin{align*}
	\frac{\E \sign(Z) \, X^{\ast} \, Z}{\lambda^{\ast}}
	 =  \frac{\E X^{\ast} \, |Z|}{\lambda^{\ast}} 
	 =  \E X^{\ast} \, Z^{\ast} 
	  = \rho_{\varphi, \beta}(X^\ast)
	 = \|X^{\ast}\|_{\varphi, \beta} 
	 = \|  \sign(Z) \, X^{\ast}\|_{\varphi, \beta}
	\end{align*}
as $X^{\ast}$ is non-negative.
This establishes $\|Z\|_{\varphi, \beta}^{\ast} \ge \lambda^{\ast}$ and thus the theorem.
\end{proof}

\subsection{Existence of solutions of the characterizing equations}\label{subsec:3}
For completeness we provide conditions to guarantee that the system~\eqref{eq:Optgleichung1}--\eqref{eq:Optgleichung2} is solvable.
The solutions $t^\ast$ and $\mu^\ast$ identify the optimal solution in the initial problem~\eqref{infdef}.
This is of importance in numerical evaluations of $\rho_{\varphi, \beta}(X)$.
\begin{theorem}
	Let be $X \in M^{\psi}$, $X \ge 0$ and $\varphi \in \Delta_2$. 
	Further suppose there are optimal values $t^{\ast} > 0$ and $\mu^{\ast} \in \mathbb{R}$ inside of~\eqref{infdef} (i.e., $P \left( X = \esssup(X)\right) < 1 - \bar{\alpha}$ by Proposition~\ref{prop: attainability}). 
	If $\psi$ is differentiable, then $t^{\ast}$ and $\mu^{\ast}$ solve the equations~\eqref{eq:Optgleichung1} and~\eqref{eq:Optgleichung2} for the normal derivative $\psi^\prime$. 
	If $X$ is continuously distributed, then $t^{\ast}$ and $\mu^{\ast}$ solve the equations~\eqref{eq:Optgleichung1} and~\eqref{eq:Optgleichung2} for any subderivative $\psi^\prime$ satisfying~\eqref{eq:subdifferential}.
\end{theorem}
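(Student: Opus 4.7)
The plan is to identify \eqref{eq:Optgleichung1} and \eqref{eq:Optgleichung2} as the first-order optimality conditions for the convex objective $F(t,\mu) := t\bigl(\beta + \mu + \E\psi(X/t - \mu)\bigr)$ on the open set $(0,\infty)\times\mathbb{R}$, evaluated at the assumed minimizer $(t^\ast,\mu^\ast)$. The first step is to justify differentiation under the integral sign in $\partial_\mu F$ and $\partial_t F$. Exploiting $X\in M^\psi$ together with $\varphi\in\Delta_2$, one obtains integrable dominants on a neighborhood of $(t^\ast,\mu^\ast)$: the convexity estimate $0\le\psi^\prime(y)\le\psi(y+1)-\psi(y)$ handles $\psi^\prime(X/t-\mu)$, and the Fenchel--Young equality \eqref{eq: Fenchel-Young eq} combined with $y\,\psi^\prime(y)\le\psi(2y)-\psi(y)$ (again by convexity, for $y\ge 0$) handles $\varphi(\psi^\prime(X/t-\mu))$ and $(X/t)\,\psi^\prime(X/t-\mu)$. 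Lebesgue's dominated convergence then legitimises the interchange of derivative and expectation.

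From $\partial_\mu F(t^\ast,\mu^\ast)=t^\ast\bigl(1-\E\psi^\prime(X/t^\ast-\mu^\ast)\bigr)=0$ together with $t^\ast>0$, equation \eqref{eq:Optgleichung1} is immediate. Setting $\partial_t F(t^\ast,\mu^\ast)=\beta+\mu^\ast+\E\psi(X/t^\ast-\mu^\ast)-\E(X/t^\ast)\,\psi^\prime(X/t^\ast-\mu^\ast)=0$, I would then substitute the Fenchel--Young equality $(X/t^\ast-\mu^\ast)\,\psi^\prime(X/t^\ast-\mu^\ast)=\psi(X/t^\ast-\mu^\ast)+\varphi\bigl(\psi^\prime(X/t^\ast-\mu^\ast)\bigr)$, take expectations, and invoke \eqref{eq:Optgleichung1}. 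After cancellation, the identity collapses to $\beta=\E\varphi\bigl(\psi^\prime(X/t^\ast-\mu^\ast)\bigr)$, which is \eqref{eq:Optgleichung2}.

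The second statement, in which $X$ is continuously distributed but $\psi$ need not be differentiable, is reduced to the differentiable case by the standard observation that the subderivative of a convex function is uniquely determined off an at-most-countable set $N\subset\mathbb{R}$. Since $X$ has a continuous distribution and $t^\ast>0$, the law of $X/t^\ast-\mu^\ast$ assigns probability zero to $N$, so any two subderivatives satisfying \eqref{eq:subdifferential} coincide almost surely on this argument. The optimality conditions must now be read off from the left and right directional derivatives of $F$, but by the almost-sure uniqueness just noted, these merge to a single value irrespective of the choice of subderivative, after which the calculation of the previous paragraph goes through verbatim.

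The main obstacle is the dominated-convergence step: the integrable envelopes must hold uniformly on a neighborhood of $(t^\ast,\mu^\ast)$, not just pointwise. Here, $X\in M^\psi$ is precisely what allows arbitrary shifts of the argument $X/t\pm\mu$ without leaving the integrability class of $\psi(\cdot)$, while $\varphi\in\Delta_2$, transferred through Fenchel--Young, controls the growth of $\varphi(\psi^\prime(\cdot))$ under the same perturbations. Verifying these envelopes carefully is the technically delicate part of the argument; once in hand, the rest is essentially a computation with the Fenchel--Young identity.
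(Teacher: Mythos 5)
Your proposal is correct, but it takes a genuinely different route from the paper. You argue entirely on the primal side: differentiate $F(t,\mu)=t\bigl(\beta+\mu+\E\psi(X/t-\mu)\bigr)$ under the integral sign (legitimate because the difference quotients of the convex integrand are monotone and dominated thanks to $X\in M^{\psi}$), read \eqref{eq:Optgleichung1} off from $\partial_\mu F=0$, and obtain \eqref{eq:Optgleichung2} by substituting the Fenchel--Young equality \eqref{eq: Fenchel-Young eq} into $\partial_t F=0$ and cancelling; the non-differentiable case is reduced to this by observing that the one-sided partials sandwich the stationarity conditions and that the two one-sided derivatives of $\psi$ agree almost surely at $X/t^{\ast}-\mu^{\ast}$ when $X$ is continuously distributed. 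The paper instead works on the dual side: it uses $X\ge 0$ to identify $\rho_{\varphi,\beta}(X)$ with $\|X\|_{\varphi,\beta}$, produces a norming functional $Z$ by Hahn--Banach, upgrades it to an optimal $Z^{\ast}\in M_{\varphi,\beta}$ via the construction in the dual-norm theorem (which is where $\varphi\in\Delta_2$ enters), and extracts the two equations from the forced equalities $\E Z^{\ast}=1$, $\E\varphi(Z^{\ast})=\beta$ together with the equality case of Fenchel--Young, which pins down $Z^{\ast}=\psi'\bigl(X/t^{\ast}-\mu^{\ast}\bigr)$. Your route is more elementary and self-contained --- it bypasses Hahn--Banach and the dual-norm machinery, and it appears not to need $X\ge0$ or $\Delta_2$ at all, since $\E\varphi\bigl(\psi'(X/t-\mu)\bigr)$ is automatically finite once $y\,\psi'(y)$ and $\psi(y)$ are integrable --- at the price of the dominated-convergence verification you rightly flag as the delicate step. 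The paper's route buys an explicit optimal dual variable $Z^{\ast}$ and keeps the proof tied to the surrounding duality results. One point worth making explicit in your write-up: since you treat the two partials separately, state that at an interior minimizer of the jointly convex $F$ every existing one-sided directional derivative is nonnegative, which is what licenses setting $\partial_\mu F=\partial_t F=0$ (and, in the non-smooth case, yields the two-sided inequalities that collapse to equalities).
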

\begin{proof}
	Let non-negative $X \in M^{\psi}$ and minimizers $t^{\ast} > 0$ and $\mu^{\ast} \in \mathbb{R}$ inside of in~\eqref{infdef} be given.
	 By the non-negativity of $X$ we have that $\rho_{\varphi, \beta}(X) = \| X \|_{\varphi, \beta}$.
	  Therefore it exists a random variable $Z \in \left( M^{\psi} \right)^{\ast} = L^{\varphi}$ such that 
	\begin{equation*}
	\|Z\|^{\ast}_{\varphi, \beta} = 1 \quad \text{and} \quad  \E X \, Z = \|X\|_{\varphi, \beta} = \rho_{\varphi, \beta} (X)
	\end{equation*} 
	by the Hahn-Banach theorem (\citet[p.\ 112 Corollary 2]{Luenberger:104246}).
	As we have shown in the proof of Theorem~\ref{thm: dual norm}, there is  $Z^{\ast} \in L^{\varphi}$ with $Z^{\ast} \in M_{\varphi, \beta}$ and $|Z| \le Z^{\ast}$. 
	Therefore, as $X \ge 0$, we have that $\E X \, Z \le \E X  \,Z^{\ast}$. Conversely, it holds that $\E X \, Z^{\ast} \le \|X\|_{\varphi, \beta} = \E X \, Z $, as $Z^{\ast}$ is feasible inside of $M_{\varphi, \beta}$, from which we conclude $\E X \, Z = \E X \, Z^{\ast}$.  
	Applying the Fenchel--Young inequality~\eqref{eq:Fenchel-Young} we obtain  
	\begin{align*}
	\E X \, Z^{\ast} &\le \E X \, Z^{\ast} + t^{\ast} \left( \beta - \E \varphi(Z^{\ast}) \right) + t^{\ast} \, \mu^{\ast} \left( 1 -  \E Z^{\ast} \right) \\
	&= t^{\ast} \left( \beta + \mu^{\ast} + \E \left( \frac{|X|}{t^{\ast}} - \mu^{\ast} \right) Z^{\ast} - \varphi(Z^{\ast})\right) \le t^{\ast} \left( \beta + \mu^{\ast} + \E \psi \left( \frac{X}{t^{\ast}} - \mu^{\ast} \right) \right) = \|X\|_{\varphi, \beta}.
	\end{align*}
	By $\E X \, Z^{\ast} = \|X\|_{\varphi, \beta}$ it follows that neither of the upper inequalties is strict and hence $\E \varphi(Z^{\ast}) = \beta$ as well as
	\begin{align}\label{eq: diff}
	\E \left( \frac{X}{t^{\ast}} - \mu^{\ast} \right) Z^{\ast} - \varphi(Z^{\ast})  = \E \psi \left( \frac{X}{t^{\ast}} - \mu^{\ast} \right).
	\end{align}
	
	If $\psi$ is differentiable, the only function establishing equality inside of Fenchel--Young inequality~\eqref{eq:Fenchel-Young} is the derivative $\psi^\prime$ (see~\eqref{eq: Fenchel-Young eq}). 
	In any other case it holds strict inequality.
	Hence by~\eqref{eq: diff}, we have that $Z^{\ast} = \psi^\prime \left( \frac{X}{t^{\ast}} - \mu^{\ast} \right)$ almost surely and therefore
	\begin{align*}
	1 = \E Z^{\ast} = \E \psi^\prime \left( \frac{X}{t^{\ast}} - \mu^{\ast} \right) \quad \text{and} \quad \beta = \E \varphi(Z^{\ast}) = \E \varphi \left( \psi^\prime \left( \frac{X}{t^{\ast}} - \mu^{\ast} \right) \right).
	\end{align*}
	Thus $t^{\ast}$ and $\mu^{\ast}$ solve the equations~\eqref{eq:Optgleichung1}, \eqref{eq:Optgleichung2}.
	
	Now assume $X$ is continuously distributed.
	Then the random variables $\psi^\prime \left( \frac{X}{t^{\ast}} - \mu^{\ast} \right)$ coincide almost surely, for every subderivative $\psi^\prime$ of $\psi$.
	This follows from the fact that the subderivatives $\psi^\prime$ of $\psi$ are uniquely determined, apart from at most countably many points.
	Furthermore, by the same argument as above, we have that $Z^\ast = \psi^\prime \left( \frac{X}{t^{\ast}} - \mu^{\ast} \right)$ almost surely and thus the assertion.
\end{proof}

\subsection{Application in finance}\label{sec:6.4}

In what follows we highlight the benefits of $\varphi$-divergence risk measures for a problem in optimizing a portfolio (cf.\ also \citet{Rockafellar2014a}).
To this end set
\begin{align*}
	W := \left\{ w = (w_1, \dots , w_n) \in \mathbb{R}^{n} \colon w_i \ge 0 \text{ and } \sum_{i = 1}^{n} w_i = 1 \right\}
\end{align*}
and consider random variables $X_1, \dots , X_n \in L^{\psi}$. 
$X_i$ is the loss of the i-th asset and $W$ constitutes all possible portfolio allocations.
By denoting $X_{w} :=  w_1 \, X_1 + \dots + w_n \, X_n $ the associated optimization problem is
\begin{align*}
	\min_{w \in W} \, \rho_{\varphi, \beta} \left( X_w \right) = \min_{w \in W} \inf_{ \substack{ \mu \in \mathbb{R}, \\ t> 0}} \, t \left( \beta + \mu + \E \psi \left( \frac{X_w}{t} - \mu\right)\right),
\end{align*}
which determines the portfolio allocation with minimal risk based on the risk measure $\rho_{\varphi, \beta}$. One may restate this expression as
\begin{align}\label{eq:2}
\min_{w \in W} \, \rho_{\varphi, \beta} \left( X_w \right)  = \min_{w \in W} \min_{ \substack{ \mu \in \mathbb{R}, \\ t> 0}} \, t \left( \beta + \mu + \E \psi \left( \frac{X_w}{t} - \mu\right)\right)  = \min_{\substack{w \in W, \\ \mu \in \mathbb{R}, \\ t> 0 }} \, t \left( \beta + \mu + \E \psi \left( \frac{X_w}{t} - \mu\right)\right).
\end{align}

The striking benefit in~\eqref{eq:2} is that it is sufficient to execute a single minimization problem with only two additional variables instead of two nested minimization problems when employing~\eqref{eq:alternativedual}. This reduces the complexity of the problem significantly. Similar results are available for Haezendonck--Goovaerts risk measures in \citet{article} as for Average Value-at-Risk in \citet{Rockafellar}. 

\section{Summary}\label{sec:Summary}
Coherent risk measures are of fundamental importance in mathematical finance. 
They constitute convex functionals on appropriate Banach spaces for which the entire and rich theory of convex analysis and convex duality applies.

This paper addresses a specific risk functional based on $\varphi$-divergence. The $\varphi$-divergence is a non-symmetric distance, it is used to quantify aberrations from a given probability measure. $\varphi$-divergence generalizes Kullback--Leibler divergence, which is nowadays exhaustively used in data science.

We characterize the corresponding Banach space in detail and elaborate the dual norm. The space is an Orlicz space and, in general, not reflexive.

The specific form of the $\varphi$-divergence risk measure allows a rich variety of equivalent expressions.
They can be employed mutually to exploit the specific properties in given applications.
We also exemplify the properties for a typical problem in mathematical finance.





\bibliographystyle{abbrvnat}
\bibliography{LiteraturAlois,LiteraturPaul}

\begin{thebibliography}{40}
\providecommand{\natexlab}[1]{#1}
\providecommand{\url}[1]{\texttt{#1}}
\expandafter\ifx\csname urlstyle\endcsname\relax
  \providecommand{\doi}[1]{doi: #1}\else
  \providecommand{\doi}{doi: \begingroup \urlstyle{rm}\Url}\fi

\bibitem[Acerbi(2002)]{Acerbi2002a}
C.~Acerbi.
\newblock Spectral measures of risk: A coherent representation of subjective
  risk aversion.
\newblock \emph{Journal of Banking \& Finance}, 26:\penalty0 1505--1518, 2002.
\newblock \doi{10.1016/S0378-4266(02)00281-9}.

\bibitem[Ahmadi-Javid(2012{\natexlab{a}})]{AhmadiJavidEVaR}
A.~Ahmadi-Javid.
\newblock Entropic {V}alue-at-{R}isk: A new coherent risk measure.
\newblock \emph{Journal of Optimization Theory and Applications}, 155\penalty0
  (3):\penalty0 1105--1123, 2012{\natexlab{a}}.
\newblock \doi{10.1007/s10957-011-9968-2}.

\bibitem[Ahmadi-Javid(2012{\natexlab{b}})]{AhmadiJavidEVaR2}
A.~Ahmadi-Javid.
\newblock Addendum to: {E}ntropic {V}alue-at-{R}isk: A new coherent risk
  measure.
\newblock \emph{Journal of Optimization Theory and Applications}, 155\penalty0
  (3):\penalty0 1124--1128, 3 2012{\natexlab{b}}.
\newblock \doi{10.1007/s10957-012-0014-9}.

\bibitem[Ahmadi-Javid and Pichler(2017)]{AhmadiPichler}
A.~Ahmadi-Javid and A.~Pichler.
\newblock An analytical study of norms and {B}anach spaces induced by the
  entropic value-at-risk.
\newblock \emph{Mathematics and Financial Economics}, 11\penalty0 (4):\penalty0
  527--550, 2017.
\newblock \doi{10.1007/s11579-017-0197-9}.

\bibitem[Artzner et~al.(1997)Artzner, Delbaen, and Heath]{Artzner1997}
P.~Artzner, F.~Delbaen, and D.~Heath.
\newblock Thinking coherently.
\newblock \emph{Risk}, 10:\penalty0 68--71, 1997.

\bibitem[Artzner et~al.(1999)Artzner, Delbaen, Eber, and Heath]{Artzner1999}
P.~Artzner, F.~Delbaen, J.-M. Eber, and D.~Heath.
\newblock Coherent {M}easures of {R}isk.
\newblock \emph{Mathematical Finance}, 9:\penalty0 203--228, 1999.
\newblock \doi{10.1111/1467-9965.00068}.

\bibitem[Bellini and Rosazza~Gianin(2008{\natexlab{a}})]{Bellini2008986}
F.~Bellini and E.~Rosazza~Gianin.
\newblock On {H}aezendonck risk measures.
\newblock \emph{Journal of Banking \& Finance}, 32\penalty0 (6):\penalty0
  986--994, 2008{\natexlab{a}}.
\newblock \doi{10.1016/j.jbankfin.2007.07.007}.

\bibitem[Bellini and Rosazza~Gianin(2008{\natexlab{b}})]{article}
F.~Bellini and E.~Rosazza~Gianin.
\newblock Optimal portfolios with {H}aezendonck risk measures.
\newblock \emph{Statistics \&amp Decisions}, 26, 01 2008{\natexlab{b}}.
\newblock \doi{10.1524/stnd.2008.0915}.

\bibitem[Bellini and Rosazza~Gianin(2012)]{Bellini2012}
F.~Bellini and E.~Rosazza~Gianin.
\newblock Haezendonck--{G}oovaerts risk measures and {O}rlicz quantiles.
\newblock \emph{Insurance: Mathematics and Economics}, 51\penalty0
  (1):\penalty0 107--114, 2012.
\newblock \doi{10.1016/j.insmatheco.2012.03.005}.

\bibitem[Bellini et~al.(2014)Bellini, Klar, M\"{u}ller, and
  Rosazza~Gianin]{Bellini201441}
F.~Bellini, B.~Klar, A.~M\"{u}ller, and E.~Rosazza~Gianin.
\newblock Generalized quantiles as risk measures.
\newblock \emph{Insurance: Mathematics and Economics}, 54:\penalty0 41--48,
  2014.
\newblock \doi{10.1016/j.insmatheco.2013.10.015}.

\bibitem[Bo\c{t} et~al.(2009)Bo\c{t}, Grad, and Wanka]{BotGradWanka2009}
R.~I. Bo\c{t}, S.-M. Grad, and G.~Wanka.
\newblock \emph{Duality in Vector Optimization}.
\newblock Springer, 2009.
\newblock \doi{10.1007/978-3-642-02886-1}.

\bibitem[Breuer and Csisz\'{a}r(2013{\natexlab{a}})]{Breuer2013}
T.~Breuer and I.~Csisz\'{a}r.
\newblock Measuring distribution model risk.
\newblock \emph{Mathematical Finance}, 2013{\natexlab{a}}.
\newblock \doi{10.1111/mafi.12050}.

\bibitem[Breuer and Csisz\'{a}r(2013{\natexlab{b}})]{Breuer2013a}
T.~Breuer and I.~Csisz\'{a}r.
\newblock Systematic stress tests with entropic plausibility constraints.
\newblock \emph{Journal of Banking \& Finance}, 37\penalty0 (5):\penalty0
  1552--1559, 2013{\natexlab{b}}.
\newblock \doi{10.1016/j.jbankfin.2012.04.013}.

\bibitem[Cheridito and Li(2008)]{Cheridito2008}
P.~Cheridito and T.~Li.
\newblock Dual characterization of properties of risk measures on {O}rlicz
  hearts.
\newblock \emph{Mathematics and Financial Economics}, 2\penalty0 (1):\penalty0
  29--55, 2008.
\newblock \doi{10.1007/s11579-008-0013-7}.

\bibitem[Cheridito and Li(2009)]{Cheridito2009a}
P.~Cheridito and T.~Li.
\newblock Risk measures on {O}rlicz hearts.
\newblock \emph{Mathematical Finance}, 19\penalty0 (2):\penalty0 189--214,
  2009.
\newblock \doi{10.1111/j.1467-9965.2009.00364.x}.

\bibitem[Delbaen(2015)]{Delbaen2015}
F.~Delbaen.
\newblock Remark on the paper ``{E}ntropic {V}alue-at-{R}isk: A new coherent
  risk measure'' by {A}mir {A}hmadi-{J}avid.
\newblock In P.~Barrieu, editor, \emph{Risk and Stochastics}. World Scientific,
  2015.
\newblock ISBN 978-1-78634-194-5.
\newblock \doi{10.1142/q0057}.

\bibitem[Delbaen and Owari(2019)]{Delbaen2019}
F.~Delbaen and K.~Owari.
\newblock Convex functions on dual {O}rlicz spaces.
\newblock \emph{Positivity}, 23\penalty0 (5):\penalty0 1051--1064, 2019.
\newblock \doi{10.1007/s11117-019-00651-x}.

\bibitem[Goovaerts et~al.(2012)Goovaerts, Linders, Weert, and
  Tank]{Goovaerts2012}
M.~Goovaerts, D.~Linders, K.~V. Weert, and F.~Tank.
\newblock On the interplay between distortion, mean value and the
  {H}aezendonck-{G}oovaerts risk measures.
\newblock \emph{Insurance: Mathematics and Economics}, 51:\penalty0 10--18,
  2012.
\newblock \doi{10.1016/j.insmatheco.2012.02.012}.

\bibitem[Kalmes and Pichler(2018)]{KalmesPichler}
T.~Kalmes and A.~Pichler.
\newblock On {B}anach spaces of vector-valued random variables and their duals
  motivated by risk measures.
\newblock \emph{Banach Journal of Mathematical Analysis}, 12\penalty0
  (4):\penalty0 773--807, 2018.
\newblock \doi{10.1215/17358787-2017-0026}.

\bibitem[Krasnosel'skii and Rutickii(1961)]{KrasnoselskiiRutickii1961}
M.~A. Krasnosel'skii and Y.~B. Rutickii.
\newblock \emph{Convex functions and Orlicz spaces}.
\newblock Noordhoff Groningen, 1961.

\bibitem[Kusuoka(2001)]{Kusuoka}
S.~Kusuoka.
\newblock On law invariant coherent risk measures.
\newblock In \emph{Advances in mathematical economics}, volume~3, chapter~4,
  pages 83--95. Springer, 2001.
\newblock \doi{10.1007/978-4-431-67891-5}.

\bibitem[Luenberger(1969)]{Luenberger:104246}
D.~G. Luenberger.
\newblock \emph{{Optimization by vector space methods}}.
\newblock Decision and control. Wiley, New York, NY, 1969.
\newblock URL \url{https://cds.cern.ch/record/104246}.

\bibitem[Ogryczak and Ruszczy\'{n}ski(1999)]{Rusz1999}
W.~Ogryczak and A.~Ruszczy\'{n}ski.
\newblock From stochastic dominance to mean-risk models: Semideviations as risk
  measures.
\newblock \emph{European Journal of Operational Research}, 116:\penalty0
  33--50, 1999.
\newblock \doi{10.1016/S0377-2217(98)00167-2}.

\bibitem[Ogryczak and Ruszczy\'{n}ski(2002)]{RuszOgryczak}
W.~Ogryczak and A.~Ruszczy\'{n}ski.
\newblock Dual stochastic dominance and related mean-risk models.
\newblock \emph{SIAM Journal on Optimization}, 13\penalty0 (1):\penalty0
  60--78, 2002.
\newblock \doi{10.1137/S1052623400375075}.

\bibitem[Pflug(2006)]{Pflug2006}
G.~{\relax Ch}. Pflug.
\newblock On distortion functionals.
\newblock \emph{Statistics and Risk Modeling (formerly: Statistics and
  Decisions)}, 24:\penalty0 45--60, 2006.
\newblock \doi{10.1524/stnd.2006.24.1.45}.

\bibitem[Pichler(2013)]{Pichler2013a}
A.~Pichler.
\newblock The natural {B}anach space for version independent risk measures.
\newblock \emph{Insurance: Mathematics and Economics}, 53\penalty0
  (2):\penalty0 405--415, 2013.
\newblock \doi{10.1016/j.insmatheco.2013.07.005}.

\bibitem[Pichler(2017)]{Pichler2017}
A.~Pichler.
\newblock A quantitative comparison of risk measures.
\newblock \emph{Annals of Operations Research}, 254\penalty0 (1):\penalty0
  251--275, 2017.
\newblock \doi{10.1007/s10479-017-2397-3}.

\bibitem[Pichler and Schlotter(2018)]{PichlerSchlotterEntropy}
A.~Pichler and R.~Schlotter.
\newblock Entropy based risk measures.
\newblock \emph{European Journal of Operational Research}, 2018.
\newblock \doi{10.1016/j.ejor.2019.01.016}.
\newblock URL \url{https://arxiv.org/abs/1801.07220}.

\bibitem[Pichler and Shapiro(2015)]{ShapiroAlois}
A.~Pichler and A.~Shapiro.
\newblock Minimal representations of insurance prices.
\newblock \emph{Insurance: Mathematics and Economics}, 62:\penalty0 184--193,
  2015.
\newblock \doi{10.1016/j.insmatheco.2015.03.011}.

\bibitem[Pick et~al.(2013)Pick, Kufner, John, and Fu\v{c}\'{i}k]{Pick2010}
L.~Pick, A.~Kufner, O.~John, and S.~Fu\v{c}\'{i}k.
\newblock \emph{Function Spaces}.
\newblock De Gruyter Series in Nonlinear Analysis and Applications 14. Walter
  de Gruyter \& Co., Berlin, second and extended edition, 2013.
\newblock URL \url{http://books.google.com/books?id=KXt6BV9G5k4C}.

\bibitem[Rockafellar(1970)]{Rockafellar1970}
R.~T. Rockafellar.
\newblock \emph{Convex Analysis}.
\newblock Princeton University Press, 1970.
\newblock ISBN 978-1-4008-7317-3.
\newblock URL \url{https://www.jstor.org/stable/j.ctt14bs1ff}.

\bibitem[Rockafellar(1976)]{rockafellar1976integral}
R.~T. Rockafellar.
\newblock Integral functionals, normal integrands and measurable selections.
\newblock In \emph{Nonlinear operators and the calculus of variations}, pages
  157--207. Springer, 1976.
\newblock \doi{10.1007/BFb0079944}.

\bibitem[Rockafellar and Royset(2014)]{Rockafellar2014}
R.~T. Rockafellar and J.~O. Royset.
\newblock Random variables, monotone relations, and convex analysis.
\newblock \emph{Mathematical Programming}, 148\penalty0 (1-2):\penalty0
  297--331, 2014.
\newblock \doi{10.1007/s10107-014-0801-1}.

\bibitem[Rockafellar and Royset(2015)]{Rockafellar2015}
R.~T. Rockafellar and J.~O. Royset.
\newblock Measures of residual risk with connections to regression, risk
  tracking, surrogate models, and ambiguity.
\newblock \emph{{SIAM} Journal on Optimization}, 25\penalty0 (2):\penalty0
  1179--1208, 2015.
\newblock \doi{10.1137/151003271}.

\bibitem[Rockafellar and Royset(2016)]{Rockafellar2016}
R.~T. Rockafellar and J.~O. Royset.
\newblock Superquantile/ {CVaR} risk measures: second-order theory.
\newblock \emph{Annals of Operations Research}, 262\penalty0 (1):\penalty0
  3--28, 2016.
\newblock \doi{10.1007/s10479-016-2129-0}.

\bibitem[Rockafellar and Uryasev(2000)]{RockafellarUryasev2000}
R.~T. Rockafellar and S.~Uryasev.
\newblock Optimization of {C}onditional {V}alue-at-{R}isk.
\newblock \emph{Journal of Risk}, 2\penalty0 (3):\penalty0 21--41, 2000.
\newblock \doi{10.21314/JOR.2000.038}.

\bibitem[Rockafellar and Uryasev(2002)]{Rockafellar}
R.~T. Rockafellar and S.~Uryasev.
\newblock Conditional value-at-risk for general loss distributions.
\newblock \emph{Journal of Banking and Finance}, 26:\penalty0 1443--1471, 2002.
\newblock \doi{10.1016/S0378-4266(02)00271-6}.

\bibitem[Rockafellar and Uryasev(2013)]{Rockafellar2013}
R.~T. Rockafellar and S.~Uryasev.
\newblock The fundamental risk quadrangle in risk management, optimization and
  statistical estimation.
\newblock \emph{Surveys in Operations Research and Management Science},
  18\penalty0 (1-2):\penalty0 33--53, 2013.
\newblock \doi{10.1016/j.sorms.2013.03.001}.

\bibitem[Rockafellar et~al.(2014)Rockafellar, Royset, and
  Miranda]{Rockafellar2014a}
R.~T. Rockafellar, J.~O. Royset, and S.~I. Miranda.
\newblock Superquantile regression with applications to buffered reliability,
  uncertainty quantification, and conditional value-at-risk.
\newblock \emph{European Journal of Operational Research}, 234\penalty0
  (1):\penalty0 140--154, 2014.
\newblock \doi{10.1016/j.ejor.2013.10.046}.

\bibitem[Wang and Dhaene(1998)]{DhaeneWang}
S.~Wang and J.~Dhaene.
\newblock {Comonotonicity, correlation order and premium principles}.
\newblock \emph{Insurance: Mathematics and Economics}, 22\penalty0
  (3):\penalty0 235--242, July 1998.
\newblock \doi{10.1016/S0167-6687(97)00040-1}.

\end{thebibliography}
\end{document}